\newif\ifTR\TRtrue
\newenvironment{proof}
    {\noindent{\em Proof.}
    }
    {
\raggedright{$\square$}
\vspace{3pt}
    }
\long\def\ignore#1{}
\def\myparagraph#1{\vspace{0pt}\noindent{\bf #1~~}}
\def\PB{{\tt PB}\xspace}
\def\GPB{{\tt GPB}\xspace}
\def\CFG{{\tt CFG}\xspace}
\def\WCFG{{\tt WCFG}\xspace}
\def\CYK{{\tt CYK}\xspace}
\def\PBs{{\tt PB}s\xspace}
\def\GPBs{{\tt GPB}s\xspace}
\def\WCFGs{{\tt WCFG}s\xspace}
\def\CNF{{\tt CNF}\xspace}
\newcommand{\eqdef}{{\stackrel{\mbox{\tiny \tt ~def~}}{=}}}
\newtheorem{theorem}{Theorem}
\newtheorem{lemma}[theorem]{Lemma}
\begin{document}

\title{Combining pattern-based CRFs and weighted context-free grammars}
\author{
Rustem~Takhanov  \\ \tt{takhanov@ist.ac.at} 
\and
Vladimir~Kolmogorov \\ \tt{vnk@ist.ac.at}
}
\date{}
\maketitle

\begin{abstract}
We consider two models for the sequence labeling (tagging) problem.
The first one is a {\em Pattern-Based  Conditional Random Field }(\PB),
in which the energy of a string (chain labeling) $x=x_1\ldots x_n\in D^n$
is a sum of terms over intervals $[i,j]$ where each term is non-zero only if the substring $x_i\ldots x_j$
equals a prespecified word $w\in \Lambda$.
The second model is a {\em Weighted Context-Free Grammar }(\WCFG) frequently used for natural language processing.
\PB and \WCFG encode local and non-local interactions respectively, and thus can be viewed as complementary.

We propose a {\em Grammatical Pattern-Based CRF model }(\GPB) that combines the two  in a natural way.
We argue that it has certain advantages over existing approaches such
as the {\em Hybrid model} of Bened{\'i} and Sanchez that combines {\em $\mbox{$N$-grams}$} and \WCFGs.
The focus of this paper is to analyze the complexity of inference tasks in a \GPB such as computing MAP.
We present a polynomial-time algorithm for general \GPBs
and a faster version for a special case that we call {\em Interaction Grammars}.
\end{abstract}

\section{Introduction}
The {\em sequence labeling} (or the {\em sequence tagging}) problem is a supervised learning problem with the following formulation: given an observation
$z$ (which is usually a sequence of $n$ values), infer labeling $x=x_1\ldots x_n$
where each variable $x_i$ takes values
in some finite domain $D$.
Such problem appears in many domains such as text and speech analysis, signal analysis, and bioinformatics. Standard approaches to this problem include Hidden Markov Models ({\em HMM}) and Conditional Random Fields ({\em CRFs}).

In many applications  labelings $x$ satisfy the following ``sparsity'' assumption:
subwords $x_{i:j}\eqdef x_i\ldots x_j$ of a fixed length $k=j-i+1$
are distributed not uniformly over $D^k$, but are rather concentrated
in a small subset of $D^k$.
Words in this subset are called ``patterns''; we will denote the set of patterns as $\Lambda\subseteq D^\ast=\bigcup_{k\ge 0}D^k$.
Usually,  $\Lambda$ is taken as the set of short words (e.g.\ of length $k<5$)
that occur sufficiently often as subwords of labelings in the training data.
For problems satisfying this assumption it is natural to define a model given by the probability distribution $p_\theta(x|z)=\frac{1}{Z}\exp\{-E^{pb}_\theta(x|z)\}$
with the energy function 
\begin{equation}
E^{pb}_\theta(x|z)=\sum_{\alpha\in\Lambda}\sum_{\substack{[i,j]\subseteq[1,n]\\j-i+1=|\alpha|}}\psi^\alpha_{ij}\cdot[x_{i:j}=\alpha]
\label{eq:patternCRF}
\end{equation}
where $\theta$ is a vector of parameters to be learned from data, $\psi^\alpha_{ij}$ is a function that can depend on $z$ and $\theta$, $|\alpha|$ is the length of word $\alpha$ and $[\cdot]$ is the {\em Iverson bracket} (i.e. $[s]=1$ if statement $s$ is true, otherwise $[s]=0$). This model is called a {\em pattern-based CRF }(\PB) \cite{Ye:NIPS09,TK:ICML}.

Intuitively, pattern-based CRFs allow to model long-range interactions that are carried through some selected sequences of labels.
This could be useful in a variety of applications:
in natural language processing patterns could correspond to certain syntactic constructions or stable idioms;
in protein secondary structure prediction --- to sequences of dihedral angles associated with stable configurations such as $\alpha$-helixes;
in gene prediction --- to sequences of nucleotides with supposed functional roles such as ``exon'' or ``intron'', specific codons, etc.

But along with long-range interactions, modeled by a set of words $\Lambda$, there could be interactions that have a very non-local nature.
Consider, for example, a language model problem, i.e. a problem of building a probabilistic model of sentences in a certain language.
Standard and the simplest way to build a language model is N-grams approach, which is equivalent to representing the probability of a sentence as a pattern-based CRF where the set $\Lambda$ is equal to the set of frequent N-grams.
It is a well-known fact that for most of natural languages, a set of frequent N-grams for small $N$ is not a large number, which justifies application of \PB.
 But at the same time, sentences also have a syntactic structure
that sometimes can be described by a context-free grammar.
Such syntactic correlations could have a non-local structure, and cannot be encoded in the \PB framework.
Thus, we have a problem of introducing grammar-based structures into a model. One of approaches to implement this idea can be found, e.g., in \cite{BenediS00}.

\begin{figure}
\centering
\includegraphics[width=0.2\textwidth]%
{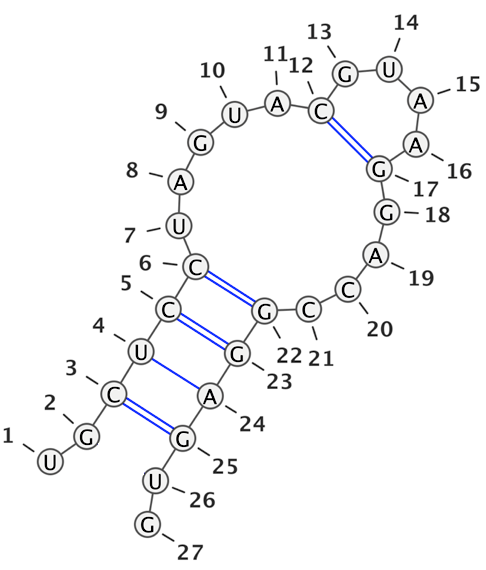}
\caption[my caption]{An RNA secondary structure without pseudoknots. Every pair of complementary nucleotides that have hydrogen bonds with each other correspond to opening and closing brackets in a correct bracket structure:

{
\def\E#1{$\underset{.}{\mbox{\tt #1}}$}
\def\L#1{$\underset{(}{\mbox{\tt #1}}$}
\def\R#1{$\underset{)}{\mbox{\tt #1}}$}

~~\E{U}\E{G}\L{C}\L{U}\L{C}\L{C}\E{U}\E{A}\E{G}\E{U}\E{A}\L{C}\E{G}\E{U}\E{A}\E{A}\R{G}\E{G}\E{A}\E{C}\E{C}\R{G}\R{G}\R{A}\R{G}\E{U}\E{G}
}

%{\tt UGCUCCUAGUACGUAAGGACCGGAGUG}
%\hspace{2pt}.\,\,.\,\,(\,(\,\,(\,(\,\,\,.\,\,.\,\,.\,\,.\,\,.\,\,(\,\,.\,\,.\,\,.\,\,.\,\,)\,\,.\,\,.\,\,.\,\,.\,\,)\,\,)\,)\,\,)\,\,.\,\,.

Equivalently, the sequence can be parsed according to a context-free grammar defined by the set of nonterminals $\{S\}$, the set of terminals $\{{\tt G},{\tt A},{\tt U},{\tt C}\}$,
and rules $S\rightarrow SS$, $S\rightarrow {\tt G}|{\tt A}|{\tt U}|{\tt C}$, $S\rightarrow {\tt G} S {\tt C}|{\tt C}S{\tt G}|{\tt U}S{\tt A}|{\tt A}S{\tt U}$.
}
\label{fig:ribosome}
\end{figure}

Let us give another example of a problem that has a similar flavour. Suppose that our task is to identify, for a given RNA sequence $z\in\{{\tt G},{\tt A},{\tt U},{\tt C}\}^n$,
certain properties of each nucleotide $z_i$ (e.g.\ discretized dihedral angles).
Suppose we have 4 vocabularies of such properties $D_{\tt G},D_{\tt A},D_{\tt U},D_{\tt C}$, one for each nucleotide. Thus, we have again the sequence labeling problem where labels alphabet is $D=\cup_{t}D_t$.
It is natural to model local interactions by pattern-based models, if ``sparsity'' assumption is satisfied. But RNA's secondary structure can also be modeled through context-free grammars
(see Fig. \ref{fig:ribosome}). Now we can introduce a context-free grammar with nonterminals set $\{S\}$, terminals set $V$, and weighted rules
$S\rightarrow SS$, $S\rightarrow a, a\in V$, $S\rightarrow x S y$, where $x\in D_t$, $y\in D_{t'}$ and $t,t'$ are two complementary nucleotides. This is a generalization of the context-free grammar introduced in Fig. \ref{fig:ribosome}. Then any potential labeling for an input RNA sequence could be optimally parsed according to this grammar, and this parsing can be interpreted as a system of interacting nucleotides in the sequence, i.e. secondary structure. Moreover, the weight of the parsing is a sum of binary terms defined on labels where each term is an interaction potential between corresponding nucleotides. This weight could also be included into a probabilistic model.
Note that in the last model an interaction that is associated with weighted rule $S\rightarrow x S y$ cannot be modeled by pattern-based CRFs.

%Direct utilization of the previous example leads us to the following model.
These examples motivate the following model.
Consider a weighted context-free
grammar $\Gamma = \left(D,N,S,R,\nu\right)$, where $N$ is the alphabet of nonterminals, $S\in N$ is the initial symbol, $R$ is a set of rules, $\nu:R\rightarrow \mathbb{R}$ is a weighting function (that  could depend on parameters $\theta$ and observation $z$). A {\em Grammatical Pattern-Based model }(\GPB) is defined by probability distribution $p_\theta(x,\lambda|z)\sim \exp\{-E_\theta(x,\lambda|z)\}$ with the energy
\begin{equation}
E_\theta(x,\lambda|z)=E^{pb}_\theta(x|z)+C_{\Gamma (\theta)}(x,\lambda|z)
\label{eq:Grammatical0}
\end{equation}
where $C_{\Gamma (\theta)}(x,\lambda|z)$ is the cost of derivation (parsing) $\lambda$ of $x$ according to $\Gamma (\theta)$.
We view this as a rather natural way to combine \PB and \WCFG:
defining energy as a sum of terms that encode different constraints has a long history in the CRF literature.

\noindent {\bf Contributions.}
This paper investigates the complexity of several inference tasks in a \GPB.
Our focus is on the problem of computing a Maximum a Posteriori (MAP) solution $(x,\lambda)$,
i.e.\ minimizing energy \eqref{eq:Grammatical0}. We show that this can be done in polynomial
time; complexities are stated in the end of this section.
We also discuss how our algorithms can be adapted to compute in polynomial time sums $\sum_{x,\lambda}\exp\{-E_\theta(x,\lambda|z)\}$
and $\sum_{\lambda}\exp \{-E_\theta(x,\lambda|z)\}$ (for  given $x,z$).

\noindent {\bf Related work.}
Pattern-based CRFs with key inference
algorithms first appeared in \cite{Ye:NIPS09}.
Refined versions of these algorithms and an efficient sampling technique were described in \cite{TK:ICML}.
Applications of \PB considered in the literature so far include handwritten character recognition, identification of named entities from text  \cite{Ye:NIPS09}, optical character recognition \cite{Qian:ICML09} and the protein dihedral angles prediction problem \cite{TK:ICML}.
Further generalization of the model proposed in \cite{Qian:ICML09} considered a pattern as a set of strings rather than one single string. Another direction of generalization \cite{Nguyen:11} extends the segments of variables on which patterns are defined by allowing correspondence of each label of a pattern to a successive repetition of it on a line.

Probably the closest to ours is a line of research represented by the work \cite{BenediS00}.
They proposed a probabilistic {\em Hybrid model} that also integrates local correlations (namely, the $N$-gram model which is
a special case of \PB) and stochastic grammars.
Unlike us, they define the probability successively (their model is slightly different, but is equivalent to the following):
\begin{equation*}
\begin{array}{lr}
p(x_1\ldots x_n)=\prod_{i=1}^n p(x_i|x_1\ldots x_{i-1}) \\
p(x_i|x_1\ldots x_{i-1}) =\alpha p^{\tt pb}\left(x_i|x_1\ldots x_{i-1}\right) +\\
\hspace{72pt}+(1-\alpha)p^{\Gamma}(x_i|x_1\ldots x_{i-1}) \hspace{240pt}\\
p^{\Gamma}(x_i|x_1\ldots x_{i-1}) = \frac{\sum_{y_{i+1:n},\lambda}p^{\Gamma}(x_1\ldots x_{i},y_{i+1:n},\lambda)}{\sum_{y_{i:n},\lambda}p^{\Gamma}(x_1\ldots x_{i-1},y_{i:n},\lambda)}
\end{array}
\end{equation*}
where $p^{\tt pb}$ is an $N$-gram (a pattern-based) term and $p^{\Gamma}$ is a grammar term (defined as $\sim e^{-C_{\Gamma}(x,\lambda)}$), and $\alpha\in [0,1]$ is some parameter that mix two models.
The Hybrid model have been applied to various problems,
%from computing text perplexity (in text processing it is a measure of how well a probability model predicts a sample)
from modeling text %(in text processing it is a measure of how well a probability model predicts a sample)
to RNA secondary structure prediction \cite{Salvador01rna}.

% and $p^{\Gamma}(x_i|x_1\cdots x_{i-1}) = \frac{\sum_{y_{i+1:n},\lambda}p^{\Gamma}(x_1\cdots x_{i},y_{i+1:n},\lambda)}{\sum_{y_{i:n},\lambda}p^{\Gamma}(x_1\cdots x_{i-1},y_{i:n},\lambda)}$.
Let us compare computational requirements of the two models, assuming that there is no dependence on observation $z$
(as was the case in  \cite{BenediS00,Salvador01rna}).
Both \GPB and the Hybrid models allow efficient computation of probability $p(x_1\ldots x_n)$ for a given word
(for the former it equals $p(x)=\sum_\lambda p(x,\lambda)$).
As can be shown, for \GPB this can be done in $O(|R|n^3)$ time.\footnote{We assume that the normalization
constant $\sum_{x,\lambda}\exp\{-E_\theta(x,\lambda)\}$ has been precomputed; this can be done in polynomial
time. Alternatively, this constant can be ignored if we are interested only in probability ratios.}
For the Hybrid model we would need $O(|R|n^3)$ for evaluating each term $p(x_i|x_1\ldots x_{i-1})$,
resulting in $O(|R|n^4)$ total time.

Another important task is computing labeling $x$ with the highest probability.
A tricky definition of total probability in the Hybrid makes this a non-trivial problem
(probably intractable), though sampling is easy.
We conjecture that in \GPB maximizing $p(x)$ over $x$ is also intractable;
however, we can do efficiently the next best thing, namely maximize $p(x,\lambda)$ over $x,\lambda$.

Therefore, we argue that  \GPB  has computational advantages over the Hybrid model.
Furthermore, a \GPB model can be trained using standard techniques such as the maximum likelihood principle
 (with a gradient-based on an EM-like method) or the struct-SVM approach with hidden variables.
The former requires computing sums of the form $\sum_{x,\lambda}\exp\{-E_\theta(x,\lambda)\}$
while the latter uses minimization of $E_\theta(x,\lambda)$ as a subroutine;
both tasks can be performed in polynomial time.

Another way of mixing $N$-grams correlations with context-free grammars was given in \cite{Wang00aunified}. Their model
is quite different from ours, and uses an unweighted version of \CFG.

Finally, we would like to mention that the idea of combining a certain sequential model with context-free grammars can be traced back to~\cite{Bar-Hillel}.
One of their classical results is as follows: if $\Gamma$ is a context-free grammar given in Chomsky Normal Form ($\CNF$) with $m$ nonterminals
and $r$ rules and $F$ is a finite-state automaton with $s$ states, then intersection of languages defined by them is a
context-free language that can be described by a context-free grammar with $ms^2$ nonterminals and $rs^3$ rules.
With some work, this result together with a version  of the \CYK parsing algorithm from~\cite{Katsirelos} could give
an alternative way to derive the algorithm for general grammars.
Instead of following this route, we chose to present the algorithm for general grammars directly.

\noindent{\bf Patterns interaction grammar.}
The algorithm for general \WCFGs has a rather high complexity (although polynomial).
An interesting question is whether there are special cases that admit faster inference.
We identified one such case that 
we call  {\em interaction grammars}; it is given by nonterminals set
$N=\left\{S\right\}$ and a set of rules:
\begin{equation}
R = \left\{ {\begin{array}{lr}
  {S \to SS; S \to a \in D\cup\{\varepsilon\}} \\
  {S \to u S v ,\left( {u ,v } \right) \in P}
\end{array}} \right\}
\end{equation}
where $P$ is a certain subset of $\Lambda\times\Lambda$. We will denote such grammar as $\Gamma (P)$.
We also restrict that only rules of the third type can have nonzero weights.
Note that the grammar that we described in the second motivating problem example, i.e. RNA sequence labeling, is of this kind.

\begin{figure}[t]
\vskip 0.05in
\small
\begin{center}
\includegraphics[scale=0.38]{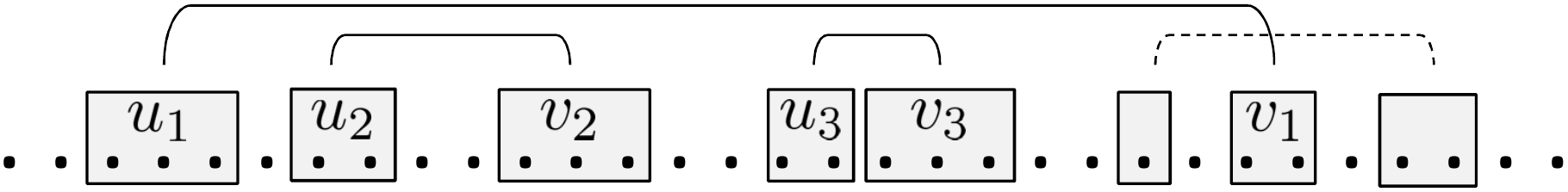} \vspace{-11pt} \\
\end{center}
\caption{Interactions that can be modeled form a bracket structure. The interaction shown by a dashed line
cannot be counted simultaneously with the interaction $(u_1,v_1)$.
}
\label{fig:brackets}
\vskip -0.1in
\end{figure}

Interaction grammars strengthen the \PB model by allowing non-local interactions between patterns,
albeit with some limitations on such interactions. Roughly speaking, two interactions $(u,v)$ and $(u',v')$
can be counted simultaneously only if they are either nested or do not overlap (Fig.~\ref{fig:brackets}).

For computational reasons we will also consider a further restriction in which the depth of inclusion does not exceed  some constant $d$.
(In the example in Fig.~\ref{fig:brackets} this depth equals 2.) Such restriction can be expressed by the grammar with non-terminals
$N=\left\{S^0, \ldots, S^{d-1}, S=S^d\right\}$ and with the following set of rules:
\begin{equation}
%\hspace{-5pt}
R = \left\{ {\begin{array}{lr}
  {S^k \to S^{k}S^{k}, k=0,\ldots,d{\text{                     }}} \\
  {S^k \to S^{k-1}; S^0 \to a \in D\cup\{\varepsilon\}}  \\
  {S^k \to u S^{k-1}v , k=1,\ldots,d,\left( {u ,v } \right) \in P}
\end{array}} \right\}
\end{equation}
Again, only rules of the fourth type can have nonzero weights.
We call it an interaction grammar of depth $d$.

\noindent{\bf Minimization.} This paper focuses on minimization algorithms for energies of the form~\eqref{eq:Grammatical0} over $\lambda, x$. Without loss of generality we can eliminate $\lambda$ by defining new functional that depends only on $x$:
\begin{equation}
E_w(x|z)=E^{pb}_\theta(x|z)+C_{\Gamma}(x|z)
\label{eq:Grammatical}
\end{equation}
where $C_{\Gamma}(x|z)$ is the cost of a least-weight derivation of $x$ according to $\Gamma$.

We will consider the following three cases:
(i) general \WCFG;
(ii) interaction grammar of depth $d\ge 2$;
(iii) interaction grammar of depth 1. For all three cases we will present algorithms.
The complexity of solving these tasks is discussed below.
We denote $L=\sum_{\alpha\in\Gamma}|\alpha|$ to be total length of patterns
and $\ell_{\max}=\max_{\alpha\in\Gamma}|\alpha|$ to be the maximum length of a pattern.

For the most general case (i) we present $\Theta\left(|R|(nL)^3\right)$ algorithm that uses $\Theta\left(|N|(nL)^2\right)$ space if grammar is given in $\CNF$. This algorithm is based on dynamic programming and uses a very similar data structures (so called {\em messages}) to standard \CYK parsing
algorithm~\cite{Aho}. Thus, a standard way of obtaining $\CNF$ leads us to $\Theta\left((|P|+|D|)(nL)^3\right)$ algorithm for interaction grammars and
$\Theta\left(d(|P|+|D|)(nL)^3\right)$ for interaction grammars of depth $d$. Note that at the core of the algorithm we compute multiplication of two $nL\times nL$ matrices over a
semiring $\left(S, \oplus, \otimes \right) = \left(\mathbb{R}, \max, + \right)$ (so called {\em min-plus product}, or {\em the distance product}) which makes it cubic with respect to $nL$. It is well-known
 \cite{Zwick} that the distance product of matrices is computationally equivalent to {\em all-pairs shortest path problem}, and this leads us to more efficient algorithms for special cases of \WCFGs,
namely interaction grammars of depth $d$.

In this case we compute
a similar set of messages that are defined on $d+1$ levels (from 0 to $d$). There are two types of operations that we apply to messages at $d$ iterations: first we compute messages of level $i$ based on already computed messages of level $i-1$ (we call this {\em vertical messages passing}), second we solve the all-pairs shortest path problem on a certain graph.
In the worst case, total complexity of this algorithm for interaction grammars of depth $d$ does not differ from complexity of the general algorithm.
However, in the best case it can be much faster, namely $O\left((nL)^2(d|P|+d\log nL+|D|)\right)$.
Computational results on some synthetic data are given 
\ifTR
in section~\ref{sec:experiments}.
\else
in the supplementary material.
\fi
Moreover, when $d=1$, the complexity is always $O (|P| n L (l_{\min} \min(|D|,\log l_{\min})+|P|))$ where $l_{\min} = \min_{w\in\Lambda}|w|$.

\section{Notation and preliminaries}
First, we introduce a few definitions.
\begin{list}{$\bullet$}{\leftmargin=1em \itemindent=0em \itemsep=-2pt}
\item A {\em pattern} is a pair $\alpha=([i,j],w)$ where $[i,j]$ is an interval in $[1,n]$ and $w=w_i\ldots w_j$ is
a sequence over alphabet $D$ indexed by integers in $[i,j]$ ($j\ge i-1$).
The {\em length} of $\alpha$ is denoted as   $|\alpha|=|w|= j-i+1$. For pattern $\alpha=([i,j],w)$ we will also denote $i_{\alpha} = i$, $j_{\alpha} = j$, $w_{\alpha} = w$.
%\item $\varepsilon$ is the empty word and $\varepsilon_s=([s+1,s],\varepsilon)$ is the empty pattern at position $s$.
\item Symbols ``$\ast$'' denotes an arbitrary word or pattern
(possibly the empty word $\varepsilon$ or the empty pattern $\varepsilon_s\triangleq([s+1,s],\varepsilon)$ at position $s$).
The exact meaning will always be clear from the context.
Similarly, ``+'' denotes an arbitrary non-empty word or pattern.
\item The concatenation of patterns $\alpha=([i,j],v)$ and $\beta=([j+1,k],w)$ is the
pattern $\alpha\beta\triangleq([i,k],vw)$.
Whenever we write $\alpha\beta$ we assume that it is defined, i.e.\ $\alpha=([\cdot,j],\cdot)$ and $\beta=([j+1,\cdot],\cdot)$ for some $j$.
Also, if $u\in D^\ast$, then $\alpha u = ([i,j+|u|],vu)$ and $u\alpha = ([i-|u|,j],uv)$.
\item For a pattern $\alpha=([i,j],v)$ and interval $[k,\ell]\subseteq[i,j]$,
the {\em subpattern of $\alpha$ at position $[k,\ell]$} is the pattern $\alpha_{k:\ell}\triangleq([k,\ell],v_{k:\ell})$
where $v_{k:\ell}=v_k\ldots v_\ell$. \\
If $k=i$ then $\alpha_{k:\ell}$ is called a {\em prefix} of $\alpha$.
If $\ell=j$ then $\alpha_{k:\ell}$ is a {\em suffix} of $\alpha$.
\item If $\beta$ is a subpattern of $\alpha$, i.e.\ $\beta=\alpha_{k:\ell}$ for some $[k,\ell]$,
then we say that $\beta$ is {\em contained} in $\alpha$. %, and write it as $\beta\subseteq\alpha$.
This is equivalent to the condition $\alpha=\ast\beta\ast$.
\item $D^{i:j}=\{([i,j],v)\:|\:v\in D^{[i,j]}\}$ is the
set of patterns with interval $[i,j]$.
\item
For a pattern $\alpha$ let $\alpha^-$ be the
prefix of $\alpha$ of length $|\alpha|-1$; if $\alpha$ is empty
then $\alpha^-$ is undefined.
\end{list}
We will consider the following  problem.
Let $\Pi^\circ$ be the set of patterns of words in $\Lambda$ placed at all possible positions:
$\Pi^\circ=\{([i,j],\alpha)\:|\:\alpha\in\Lambda)\}$.
Define the cost of pattern $x\in D^{i:j}$ via
\begin{equation}
\!\!\!\!\!\!f( x ) =\!\!\!\!\! \sum_{\alpha\in\Pi^\circ,x=\ast\alpha\ast} \!\!\! c_\alpha
\label{eq:F}
\end{equation}
where $c_\alpha\!\in\! R,\alpha\in\Pi^\circ$ are fixed constants.
Let $C_\Gamma(x)$ be the cost of a least-weight derivation of $x$ in $\Gamma$.
Our goal is to compute
\begin{equation}
M=\min_{x\in D^{1:n}} F(x)\label{eq:Mmin}
\end{equation}
where $F(x) = f(x)+C_{\Gamma}\left(x\right)$.

We select set $\Pi$ as
the set of prefixes of patterns in $\Pi^\circ$:
\begin{equation}
\Pi=\{\alpha\:|\:\exists\alpha\ast\in\Pi^\circ\}
\label{eq:GLADGAKGADF}
\end{equation}

\begin{list}{$\bullet$}{\leftmargin=1em \itemindent=0em \itemsep=-2pt}
\item For an index $s\in[0,n]$ we denote $\Pi_s$ to be the set of patterns in $\Pi$
that end at position $s$: $\Pi_s=\left\{\alpha\in\Pi | j_{\alpha} = s\right\}$.
\item For an arbitrary pattern $\alpha$, $lsp(\alpha)$ denotes the longest suffix of $\alpha$ that belongs to $\Pi_{j_{\alpha}}$.
\end{list}

\myparagraph{Graph $G[\Pi_s]$.} The following construction will be used throughout the paper. Given a set of patterns $\Pi$
and index $s$, we define $G[\Pi_s]=(\Pi_s,E[\Pi_s])$ to
be a directed graph with the following set of edges: $(\alpha,\beta)$ belongs to $E[\Pi_s]$ for $\alpha,\beta\in \Pi_s$ if $\alpha$ is a proper suffix of $\beta$ ($\beta=+\alpha$)
and $\Pi_s$ does not have an ``intermediate'' suffix $\gamma$ of $\beta$ with $|\beta|>|\gamma|>|\alpha|$. It can be checked that graph $G[\Pi_s]$ is a directed forest.
If $\varepsilon_s\in \Pi_s$ then $G[\Pi_s]$ is connected and therefore is a tree. In this case we treat $\varepsilon_s$ as the root.
An example is shown in Fig.~\ref{fig:graph}.

\begin{figure}[t]
\vskip 0.0in
\small
\begin{center}
\begin{tabular}{c@{\hspace{2pt}}c@{\hspace{1pt}}c@{\hspace{0pt}}c@{\hspace{0pt}}c@{\hspace{0pt}}c}
%.\hspace{2pt}&\hspace{2pt}.\hspace{2pt}&\hspace{2pt}.\hspace{2pt}&\hspace{2pt}.\hspace{2pt}&\hspace{2pt}.\hspace{2pt}&0 \\
.&.&.&.&.&. \\
.&.&.&.&.&0 \\
.&.&.&.&.&1 \\
.&.&.&.&1&0 \\
.&.&.&1&0&0 \\
.&.&.&1&0&1 \\
.&.&1&0&0&0 \\
.&.&1&0&1&0
\end{tabular}
\normalsize
\hspace{0pt}
\begin{tabular}{c}
\includegraphics[scale=0.20]{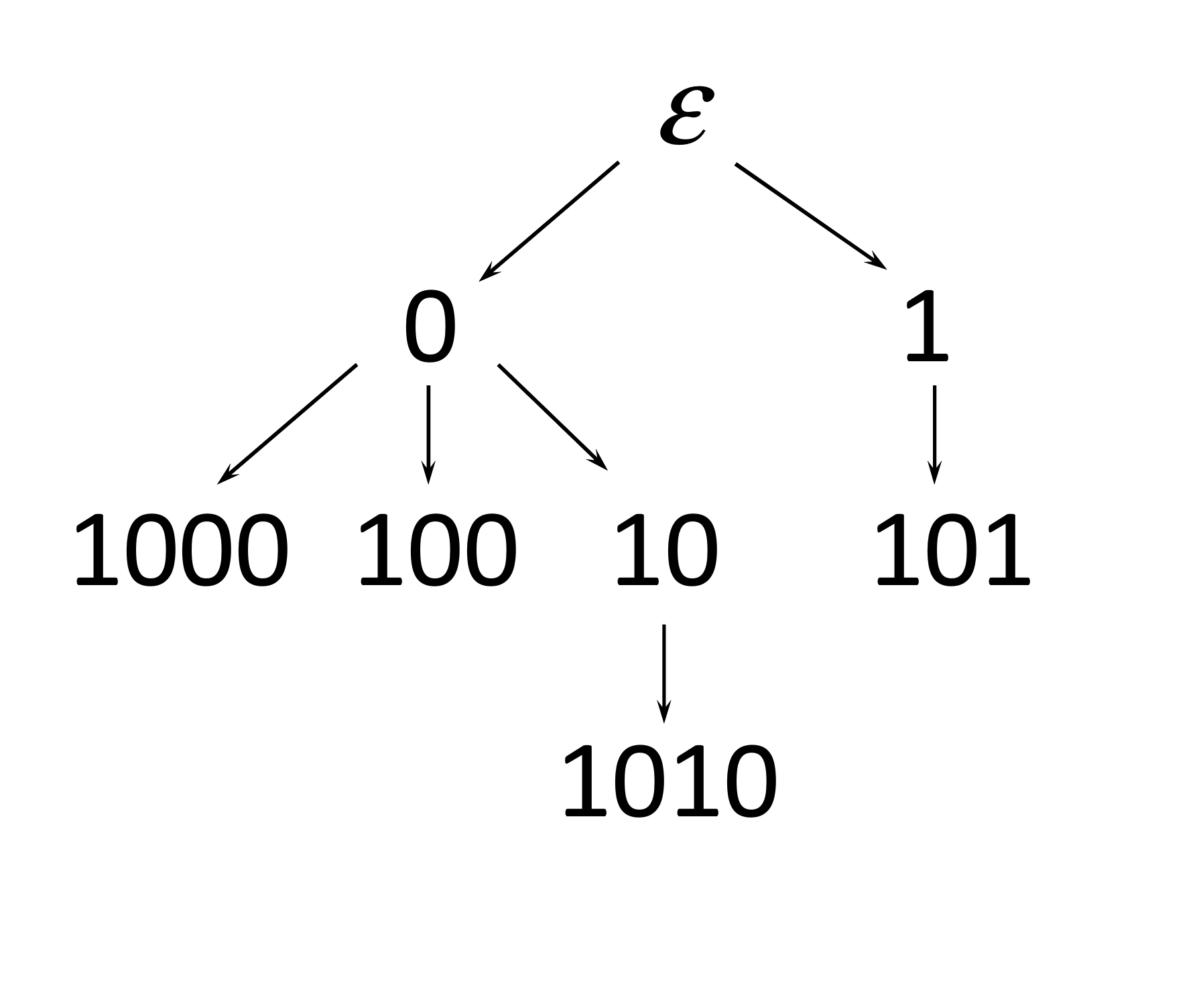} \vspace{-11pt} \\
\end{tabular}
\end{center}
\caption{Graph $G[\Pi_s]$ for the set of 8 patterns shown on the left (for brevity, their intervals are not shown;
they all end at the same position $s$.) This set of patterns would arise if $\Gamma=\{0,1,1000,1010\}$ and
$\Pi$ was defined as the set of all prefixes of patterns in $\Pi^\circ$.
}
\label{fig:graph}
\vskip -0.1in
\end{figure}

\myparagraph{Computing partial costs.} Recall that $f(\alpha)$ for a pattern $\alpha$
is the cost of all patterns inside $\alpha$ (eq.~\eqref{eq:F}).
We also define $\phi(\alpha)$ to be the cost of only those patterns that are suffixes of $\alpha$:
\begin{equation}
\!\!\!\!\!\!\phi( \alpha ) =\!\!\!\!\! \sum_{\beta\in\Pi^\circ,\alpha=\ast\beta} \!\!\! c_\beta
\end{equation}
In algorithms below we will use the following quantities: $\phi(\alpha), f(\alpha)$ for $\alpha\in\Pi$ and $f(\alpha\beta)$ for $\alpha,\beta\in\Pi$. Let us show
how to compute them efficiently.
\begin{lemma}
Values $\phi(\alpha),f(\alpha)$ for all $\alpha\in\Pi$
can be computed using $O(|\Pi|)$ additions and values $f(\alpha\beta), \alpha,\beta\in\Pi$
can be computed using $O(L|\Pi|)$ additions.
\label{lemma:varphis}
\end{lemma}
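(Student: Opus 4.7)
The plan is to compute $\phi$, then $f(\alpha)$, then $f(\alpha\beta)$, each reusing the previous output. For $\phi(\alpha)$, fix $s=j_\alpha$ and work inside the tree $G[\Pi_s]$; write $\pi(\alpha)$ for the parent of $\alpha$. The suffixes of $\alpha$ ending at $s$ and lying in $\Pi$ form an ancestor chain from a root of $G[\Pi_s]$ up to $\alpha$, and by construction no suffix of $\alpha$ of length strictly between $|\pi(\alpha)|$ and $|\alpha|$ lies in $\Pi$, hence none lies in $\Pi^\circ\subseteq\Pi$. This yields
$$\phi(\alpha)=\phi(\pi(\alpha))+c_\alpha\cdot[\alpha\in\Pi^\circ],$$
so a BFS in each $G[\Pi_s]$ computes all $\phi(\alpha)$ in $O(|\Pi|)$ additions. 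For $f(\alpha)$, each $\delta\in\Pi^\circ$ contained in $\alpha$ is a suffix of the unique prefix of $\alpha$ ending at $j_\delta$, and since every prefix of a pattern in $\Pi$ is itself in $\Pi$, this collapses to the one-step recursion $f(\alpha)=f(\alpha^-)+\phi(\alpha)$, again in $O(|\Pi|)$ additions.

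For the bivariate quantity I will use the analogous recursion
$$f(\alpha\beta)=f(\alpha\beta^-)+\phi^{\ast}(\alpha\beta),\qquad f(\alpha\varepsilon_{j_\alpha})=f(\alpha),$$
where $\phi^{\ast}(\alpha\beta)\triangleq\sum_{\delta\in\Pi^\circ,\,\alpha\beta=\ast\delta} c_\delta$ extends $\phi$ to the (possibly non-$\Pi$) pattern $\alpha\beta$. The crux is the identity
$$\phi^{\ast}(\alpha\beta)=\phi(\gamma_{\alpha,\beta}),$$
where $\gamma_{\alpha,\beta}\in\Pi_{j_\beta}$ is the longest suffix of $\alpha\beta$ lying in $\Pi$. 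It holds because any $\delta\in\Pi^\circ$ that is a suffix of $\alpha\beta$ automatically lies in $\Pi_{j_\beta}$, the suffixes of $\alpha\beta$ in $\Pi_{j_\beta}$ form an ancestor chain in $G[\Pi_{j_\beta}]$ with top element $\gamma_{\alpha,\beta}$, and so every such $\delta$ is a suffix of $\gamma_{\alpha,\beta}$ and vice versa. Given this, each admissible pair $(\alpha,\beta)$ with $i_\beta=j_\alpha+1$ contributes one addition, and there are $\Theta(|\Pi|L)$ such pairs, matching the claim.

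The step I expect to be the main obstacle is delivering $\gamma_{\alpha,\beta}$ for every pair within the same budget (these are lookups, not additions, but must not blow up either). My plan is to fix $\alpha$ and DFS through the trie of prefixes of words $w\in\Lambda$ placed at position $j_\alpha+1$, which exactly enumerates the admissible $\beta$'s, while maintaining $\gamma_{\alpha,\beta}$ as the Aho--Corasick state of $\alpha\beta$ over the automaton built on $\Pi$: descending to a child of $\beta$ is one goto/failure transition, ascending restores the previous state from a push/pop stack, and the initial state at $\beta=\varepsilon_{j_\alpha}$ is simply $\alpha$, which is itself a state of the automaton since $\alpha\in\Pi$. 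With $O(L)$ trie nodes per $\alpha$, the bookkeeping also fits in $O(|\Pi|L)$, completing the argument.
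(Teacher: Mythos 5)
Your proof is correct and follows essentially the same route as the paper's: a root-to-leaf pass over each tree $G[\Pi_s]$ for $\phi$, the telescoping recursion $f(\alpha)=f(\alpha^-)+\phi(\alpha)$, and $f(\alpha\beta)=f(\alpha\beta^-)+\phi(lsp(\alpha\beta))$ for the bivariate case. The only addition is your Aho--Corasick bookkeeping for delivering $lsp(\alpha\beta)$, a reasonable implementation detail that the paper leaves implicit since the lemma only counts additions.
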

\ifTR
\begin{proof} To compute $\phi(\cdot)$ for patterns $\alpha\in\Pi_s$, we use the following procedure: (i) set $\phi(\varepsilon_s):=\mathds{1}$;
(ii) traverse edges $(\alpha,\beta)\in E[\Pi_s]$ of tree
$G[\Pi_s]$ (from the root to the leaves) and
set
$$
\phi(\beta):=\begin{cases}
\phi(\alpha) + c_\beta & \mbox{if }\beta\in\Pi^\circ \\
\phi(\alpha)  & \mbox{otherwise}
\end{cases}
$$
After computing $\phi(\cdot)$, we go through indexes $s\in[0,n]$ in increasing order and set
$$
f(\varepsilon_s):=0,\quad f(\alpha):=f(\alpha^-)+ \phi(\alpha)\quad\forall \alpha\in \Pi_s-\{\varepsilon_s\}
$$
$$
f(\alpha\beta):=f(\alpha\beta^-)+ \phi(lsp(\alpha\beta)),\forall \alpha\in \Pi_{i_{\beta}-1},\beta\in\Pi_s-\{\varepsilon_s\}
$$
\end{proof}
\else
Proofs of all statements can be found in the supplementary material.
%A proof is given in the supplementary material.
\fi

%%%%%%%%%%%%%%%%%%%%%%%%%%%%%%%%%%%%%%%%%%%%%%%%%%%%%%%%%%%%%%%%%%%%%%%%%%%%%%%%%%%%%%%%%%%%%
%%%%%%%%%%%%%%%%%%%%%%%%%%%%%%%%%%%%%%%%%%%%%%%%%%%%%%%%%%%%%%%%%%%%%%%%%%%%%%%%%%%%%%%%%%%%%
%%%%%%%%%%%%%%%%%%%%%%%%%%%%%%%%%%%%%%%%%%%%%%%%%%%%%%%%%%%%%%%%%%%%%%%%%%%%%%%%%%%%%%%%%%%%%
%%%%%%%%%%%%%%%%%%%%%%%%%%%%%%%%%%%%%%%%%%%%%%%%%%%%%%%%%%%%%%%%%%%%%%%%%%%%%%%%%%%%%%%%%%%%%
%%%%%%%%%%%%%%%%%%%%%%%%%%%%%%%%%%%%%%%%%%%%%%%%%%%%%%%%%%%%%%%%%%%%%%%%%%%%%%%%%%%%%%%%%%%%%
%%%%%%%%%%%%%%%%%%%%%%%%%%%%%%%%%%%%%%%%%%%%%%%%%%%%%%%%%%%%%%%%%%%%%%%%%%%%%%%%%%%%%%%%%%%%%
%%%%%%%%%%%%%%%%%%%%%%%%%%%%%%%%%%%%%%%%%%%%%%%%%%%%%%%%%%%%%%%%%%%%%%%%%%%%%%%%%%%%%%%%%%%%%
%%%%%%%%%%%%%%%%%%%%%%%%%%%%%%%%%%%%%%%%%%%%%%%%%%%%%%%%%%%%%%%%%%%%%%%%%%%%%%%%%%%%%%%%%%%%%
%%%%%%%%%%%%%%%%%%%%%%%%%%%%%%%%%%%%%%%%%%%%%%%%%%%%%%%%%%%%%%%%%%%%%%%%%%%%%%%%%%%%%%%%%%%%%

\section{Minimization for general grammars}\label{sec:ring}
In this section we describe our algorithm for minimizing energy \eqref{eq:Mmin} for general grammars.
The idea is to compute certain ``messages'' by a dynamic programming procedure. Note that the same system of messages
 will be computed by algorithms for interaction grammars, though the computation strategies will differ.

We will assume that a \WCFG $\Gamma = \left(D,N,S,R,\nu\right)$ is given in $\CNF$, i.e. all of its rules belong to one of the following types:
\[
\begin{array}{*{20}{c}}
 A\rightarrow BC,\,\,\, A,B,C\in N\\
A \rightarrow w,\,\,\, A\in N, w\in D^*\\
S \rightarrow \varepsilon \hspace{26 mm}\\
\end{array}
\]
We will also assume that for all rules $A \rightarrow w\in R$, the word $w$ belongs to $\Lambda$.
The set $\Pi$ has a natural partial order $\succ$ on it, such that for $\alpha = \left([i_{\alpha},j_{\alpha}],w_{\alpha}\right)$ and $\beta = \left([i_{\beta},j_{\beta}],w_{\beta}\right)$:
\begin{equation}
 \beta \succ \alpha \quad \Leftrightarrow\quad j_{\beta} > j_{\alpha}, i_{\beta}\geq i_{\alpha}, \alpha_{i_{\beta}:j_{\alpha}}=\beta_{i_{\beta}:j_{\alpha}}
\label{eq:order}
\end{equation}
In the previous definition if $i_{\beta}>j_{\alpha}$, then the condition $ \alpha_{i_{\beta}:j_{\alpha}}=\beta_{i_{\beta}:j_{\alpha}}$ is satisfied by definition.

We will compute messages $M_{A}\left(\alpha,\beta\right)$ for $A\in N$ and any pair of patterns $\alpha,\beta\in \Pi$ such that $ \beta \succ \alpha$.
Message $M_{A}\left(\alpha,\beta\right)$ has the following interpretation. We consider interval of variables $x_{i_{\alpha}:j_{\beta}}$ and all their assignments such that
$x_{i_{\alpha}:j_{\alpha}}=w_{\alpha}$, $x_{i_{\beta}:j_{\beta}}=w_{\beta}$ and for any $(\beta,\gamma)\in E[\Pi_{j_{\beta}}]$, $x_{i_{\gamma}:j_{\gamma}}\ne w_{\gamma}$.
On these variables we form a new functional that consists of those patterns (of \PB part) of \eqref{eq:Mmin} that belong to this interval plus $C_{\Gamma_{A}}(x_{j_{\alpha}+1:j_{\beta}})$ where $\Gamma_{A}$ is the same as $\Gamma$ but with the initial symbol changed to $A$. A minimum of this functional over defined set of assignments is our message $M_{A}\left(\alpha,\beta\right)$. I.e.:
\begin{equation}
M_{A}\left(\alpha,\beta\right) = \min_{\substack{x: x=\alpha\ast = \ast\beta \\ \forall (\beta,\gamma)\in E[\Pi_{j_{\beta}}]: \\ x\ne \ast\gamma}} f\left(x\right) + C_{\Gamma_{A}}(x_{j_{\alpha}+1:j_{\beta}})
\label{eq:Message}
\end{equation}

\begin{figure}[t]
\vskip 0.05in
\small
\begin{center}
\includegraphics[scale=0.38]{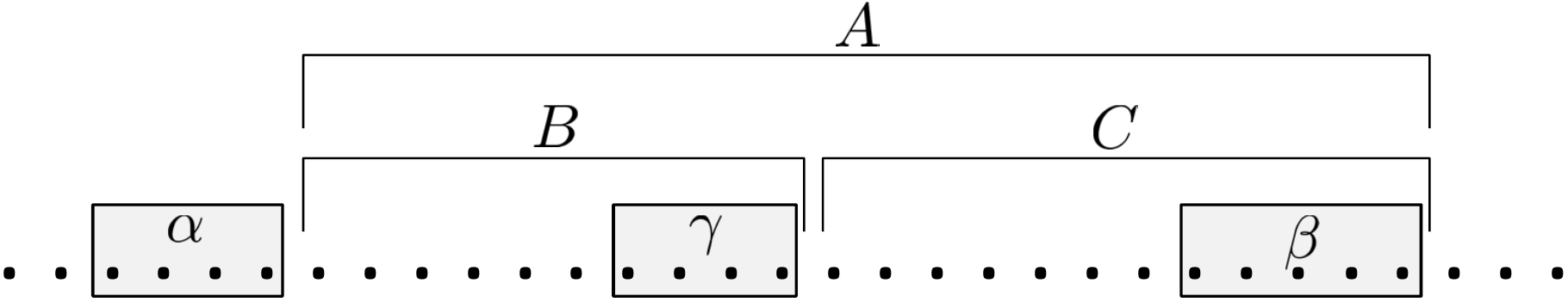} \vspace{-11pt} \\
\end{center}
\caption{Computing message $M_A(\alpha,\beta)$ assuming that $A\rightarrow BC$ is the next rule to be applied
and $\gamma$ is the longest suffix for $B$ which is in $\Pi$.
}
\label{fig:ABC}
\vskip -0.1in
\end{figure}

As in the \CYK algorithm, the main idea of our approach is to compute message $M_{A}\left(\alpha,\beta\right)$ based on the assumption that the best parsing of $x_{j_{\alpha}+1:j_{\beta}}$
for optimal $x$ starts by applying the rule $ A\rightarrow BC$.
%Therefore, we have to choose division of interval $[j_{\alpha}+1,j_{\beta}]$ into two parts (corresponding to nonterminals $B$ and $C$), and also count all patterns that belong to two both intervals simultaneously. In order to do the last we have to assume that the first interval ends with some pattern $\gamma\in\Pi$ that cannot be extended to a longer pattern from $\Pi$.
%Details are given below.
Therefore, we need to consider possible divisions of $[j_{\alpha}+1,j_{\beta}]$ into two parts corresponding to nonterminals $B$ and $C$.
To count patterns that cross the boundary between $B$ and $C$, we also need to know the pattern $\gamma$ with which the first part ends (Fig.~\ref{fig:ABC});
such ``crossing'' patterns will be included in $M_C(\gamma,\beta)$. %, if $\gamma$ is sufficiently long.
The resulting procedure is given in Algorithm~\ref{alg:Gen}.
Note that in~\eqref{eq:alg:update} we subtract $f(\gamma)$ to avoid counting patterns inside $\gamma$ twice.
\begin{algorithm}
\caption{Computing minimum of $F(x)$}\label{alg:Gen}
\begin{algorithmic}[1]
\STATE
%{\bf for} each $\alpha,\beta\in\Pi$ such that $\{x|x=\alpha\ast=\ast\beta, \forall (\beta,\gamma)\in E[\Pi_{j_{\beta}}]\,\, x\ne \ast\gamma\}=\emptyset$ set $\forall A\in N: \,\,M_A(\alpha,\beta):=+\infty$
set $M_A(\alpha,\beta):=+\infty$ for all  messages
\STATE {\bf for} each rule $A \rightarrow w \in R$ and each $\alpha\in\Pi$ set
\vspace{-3pt}
\begin{equation}
\label{eq:Message1}
M_A(\alpha,lsp\left(\alpha w\right)):=f\left(\alpha w\right)+\nu\left(A \rightarrow w\right)
\end{equation}
\vspace{-15pt}
\STATE {\bf for} each $\alpha,\beta\in\Pi:\beta\succ\alpha$ in the order of increasing $j_{\beta}-j_{\alpha}$  set $M_{A}\left(\alpha,\beta\right):=$
\begin{equation}
\label{eq:alg:update}
\hspace{-19pt}
\begin{array}{*{20}{c}}
%M_{A}\left(\alpha,\beta\right)\hspace{-3pt} :=
\hspace{-5pt}\min\limits_{r=A\rightarrow BC\in R}\min\limits_{\substack{\gamma\in\Pi:\\ \beta\succ\gamma\succ\alpha}}\hspace{-5pt}M_{B}\left(\alpha,\gamma\right) \hspace{-3pt}
+\hspace{-3pt} M_{C}\left(\gamma,\beta\right)\hspace{-3pt}- \hspace{-3pt}f\left(\gamma\right)\hspace{-3pt}+\hspace{-3pt}\nu\left(r\right)  \\
\end{array}
\end{equation}
\vspace{-10pt}
\STATE {\bf return}
$
M:=\min\limits_{\alpha\in \Pi_{n}} M_{S}(\varepsilon_0,\alpha)
$
\end{algorithmic}
\end{algorithm}

\begin{theorem} Algorithm~\ref{alg:Gen} is correct, i.e.\ it
returns the correct value of $M=\min_x F(x)$.
\label{th:correctnessGen}
\end{theorem}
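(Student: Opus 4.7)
\emph{Proof sketch (plan).} The plan is to prove by strong induction on $d = j_\beta - j_\alpha \ge 1$ that, upon termination of Algorithm~\ref{alg:Gen}, the computed value $M^{\mathrm{alg}}_A(\alpha,\beta)$ equals the definitional value $M^{\mathrm{def}}_A(\alpha,\beta)$ given by~\eqref{eq:Message}, for every $A \in N$ and every $\beta \succ \alpha$ in $\Pi$. Taking $A = S$, $\alpha = \varepsilon_0$, and minimizing over $\beta \in \Pi_n$ then yields $M = \min_x F(x)$. I would prove the two inequalities separately.

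\textbf{Upper bound $M^{\mathrm{alg}} \le M^{\mathrm{def}}$.} Fix any $x$ and parse $T$ of $x_{j_\alpha+1:j_\beta}$ that realize~\eqref{eq:Message}, and split on the root rule $r$ of $T$. If $r = A \to w$, then $x = \alpha w$ and the lsp constraint forces $\beta = lsp(\alpha w)$, so the initialization in~\eqref{eq:Message1} already achieves the cost $f(\alpha w) + \nu(r)$. If $r = A \to BC$ with split position $k$, define $\gamma$ to be the longest suffix of $x$ at position $k$ lying in $\Pi_k$. A direct check using prefix-closure of $\Pi$ (so that any suffix of $x$ at $k$ of the form $\beta_{i_\beta:k}$ lies in $\Pi$) and the inequality $i_\gamma \ge i_\alpha$ (since $x$ is defined on $[i_\alpha, j_\beta]$) shows $\beta \succ \gamma \succ \alpha$. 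The inductive hypothesis applied to $M_B(\alpha,\gamma)$ and $M_C(\gamma,\beta)$, combined with the update~\eqref{eq:alg:update}, then gives the claimed bound provided we have the decomposition identity
\[
f(x) \;=\; f(x|_{[i_\alpha, j_\gamma]}) \;+\; f(x|_{[i_\gamma, j_\beta]}) \;-\; f(\gamma).
\]

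This identity is the computational heart of the argument, and I would prove it by classifying each pattern occurrence $p \in \Pi^\circ$ with $p$ appearing in $x$ by the location of its interval relative to $[i_\gamma, j_\gamma]$. Occurrences fully inside $[i_\alpha, j_\gamma]$ (resp.\ $[i_\gamma, j_\beta]$) contribute once to the corresponding side; those contained in $\gamma$ are counted in both and corrected by $-f(\gamma)$. The one case that needs work is a ``long crossing'' occurrence with $i_p < i_\gamma$ and $j_p > j_\gamma$: its prefix $p_{i_p:k}$ would be a suffix of $x$ at $k$ belonging to $\Pi$ (again by prefix-closure) and strictly longer than $\gamma$, contradicting maximality of $\gamma$, so no such $p$ exists.

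\textbf{Lower bound $M^{\mathrm{alg}} \ge M^{\mathrm{def}}$.} I show that every value ever assigned to $M^{\mathrm{alg}}_A(\alpha,\beta)$ is realized by some feasible $(x,T)$ for~\eqref{eq:Message}. The init value is realized by $(\alpha w, A \to w)$. For an update value $M^{\mathrm{alg}}_B(\alpha,\gamma) + M^{\mathrm{alg}}_C(\gamma,\beta) - f(\gamma) + \nu(r)$, the inductive hypothesis supplies witnesses $(x_1,T_1)$ and $(x_2,T_2)$; since both assign $\gamma$ to $[i_\gamma, j_\gamma]$ they splice into a single $x$ on $[i_\alpha, j_\beta]$, and $T := (A \to BC) \cup T_1 \cup T_2$ is a parse tree of cost $C_{\Gamma_B} + C_{\Gamma_C} + \nu(r) \ge C_{\Gamma_A}(x_{j_\alpha+1:j_\beta})$. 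The $f$-terms reassemble via the same decomposition identity. The most delicate step — and the main obstacle of the proof — is verifying that the spliced $x$ still satisfies the lsp constraint of $M^{\mathrm{def}}_A(\alpha,\beta)$: no $\beta' \in \Pi_{j_\beta}$ with $|\beta'| > |\beta|$ may be a suffix of the full $x$. If $\beta'$ fits inside $x_2$'s domain, then walking the tree $G[\Pi_{j_\beta}]$ from $\beta$ up to $\beta'$ produces an immediate longer suffix of $x_2$, contradicting the lsp condition packaged into $M_C(\gamma,\beta)$. If $\beta'$ extends into $x_1$'s domain, then $\beta'_{i_{\beta'}:k}$ lies in $\Pi$ (prefix-closure) and is a suffix of $x_1$ at $k$ of length exceeding $|\gamma|$, contradicting the lsp condition packaged into $M_B(\alpha,\gamma)$. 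This closes the induction and yields $M = M^{\mathrm{alg}} = M^{\mathrm{def}} = \min_x F(x)$.
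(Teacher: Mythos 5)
Your proposal is correct and follows essentially the same route as the paper's proof: induction on $j_\beta-j_\alpha$, with one direction showing every value produced by the initialization/update is achievable by a spliced argument and parse (the paper's Lemma~\ref{lemma:Gen}), and the other showing the optimal argument decomposes at the split point via the pattern $\gamma=lsp(x_{i_\alpha:t})$. You are somewhat more explicit than the paper about the two combinatorial facts it leaves as ``easy to check'' --- the inclusion--exclusion identity for $f$ via the absence of long crossing occurrences, and the preservation of the $lsp$ constraints under splicing --- but the underlying argument is the same.
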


\ifTR
\subsection{Proof of Theorem~\ref{th:correctnessGen}}
First, we will prove the following result.
\begin{lemma}\label{lemma:Gen}  For each $\alpha,\beta,\gamma\in\Pi:\beta\succ\gamma\succ\alpha$ and a rule $r=A\rightarrow BC\in R$:
\[
M_{A}\left(\alpha,\beta\right) \leq M_{B}\left(\alpha,\gamma\right)+M_{C}\left(\gamma,\beta\right) - f\left(\gamma\right)+\nu\left(r\right)
\]
\end{lemma}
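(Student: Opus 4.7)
The plan is to construct a feasible assignment for $M_A(\alpha,\beta)$ by gluing optimal witnesses for $M_B(\alpha,\gamma)$ and $M_C(\gamma,\beta)$ along their common pattern $\gamma$. Let $y\in D^{i_\alpha:j_\gamma}$ and $z\in D^{i_\gamma:j_\beta}$ attain $M_B(\alpha,\gamma)$ and $M_C(\gamma,\beta)$ respectively; since $y_{i_\gamma:j_\gamma}=z_{i_\gamma:j_\gamma}=w_\gamma$, one may define $x\in D^{i_\alpha:j_\beta}$ unambiguously by $x_{i_\alpha:j_\gamma}=y$ and $x_{i_\gamma:j_\beta}=z$. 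The goal is then to show (i) that $x$ is feasible in the minimization \eqref{eq:Message} defining $M_A(\alpha,\beta)$, and (ii) that its cost is bounded by the right-hand side of the lemma.

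The prefix/suffix conditions $x=\alpha\ast$ and $x=\ast\beta$ are inherited directly from $y$ and $z$. The only non-trivial feasibility requirement is $x\ne\ast\delta$ for every $(\beta,\delta)\in E[\Pi_{j_\beta}]$. If $i_\delta\ge i_\gamma$ this inherits from the analogous condition on $z$, because the last $|\delta|$ positions of $x$ coincide with those of $z$. The interesting case is $i_\alpha\le i_\delta<i_\gamma$ (otherwise $\delta$ does not fit into $x$, making the condition vacuous). Here I plan to form the truncation $\delta':=\delta_{i_\delta:j_\gamma}$. It lies in $\Pi_{j_\gamma}$ because prefixes of patterns in $\Pi$ remain in $\Pi$; it is strictly longer than $\gamma$; and it has $\gamma$ as a proper suffix, because $w_\gamma=x_{i_\gamma:j_\gamma}$ occupies the last $|\gamma|$ symbols of $\delta'$. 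Since $G[\Pi_{j_\gamma}]$ is a tree rooted at $\varepsilon_{j_\gamma}$ in which parents are obtained by dropping characters from the left, $\gamma$ is an ancestor of $\delta'$; letting $\delta_1$ be the child of $\gamma$ on the path to $\delta'$, we get $(\gamma,\delta_1)\in E[\Pi_{j_\gamma}]$, with $\delta_1$ a suffix of $\delta'$ and therefore of $y$, contradicting the maximality constraint built into $M_B(\alpha,\gamma)$.

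The same ancestor argument rules out any pattern $\pi\in\Pi^\circ$ inside $x$ that strictly straddles $\gamma$, i.e.\ with $i_\pi<i_\gamma$ and $j_\pi>j_\gamma$: its truncation $\pi_{i_\pi:j_\gamma}$ is again a proper descendant of $\gamma$ in $G[\Pi_{j_\gamma}]$ and a suffix of $y$, which is forbidden. Hence every $\Pi^\circ$-pattern inside $x$ is contained in $[i_\alpha,j_\gamma]$ or in $[i_\gamma,j_\beta]$, with the ones inside $\gamma$ counted by both $f(y)$ and $f(z)$; this gives the key identity $f(x)=f(y)+f(z)-f(\gamma)$. For the grammatical part, starting the derivation of $x_{j_\alpha+1:j_\beta}$ with the rule $r=A\to BC$ and then plugging in optimal $\Gamma_B$- and $\Gamma_C$-derivations of $y_{j_\alpha+1:j_\gamma}$ and $z_{j_\gamma+1:j_\beta}$ yields $C_{\Gamma_A}(x_{j_\alpha+1:j_\beta})\le\nu(r)+C_{\Gamma_B}(y_{j_\alpha+1:j_\gamma})+C_{\Gamma_C}(z_{j_\gamma+1:j_\beta})$. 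Adding the two estimates and using the optimality of $y$ and $z$ yields the claimed inequality. The main obstacle, and really the structural heart of the argument, is the tree/ancestor propagation: it is exactly what legitimises the ``$-f(\gamma)$'' correction in the DP update, simultaneously by blocking ``longer suffix'' patterns and by forbidding $\gamma$-straddling patterns.
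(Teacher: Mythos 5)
Your proof is correct and follows essentially the same route as the paper's: glue optimal witnesses for $M_B(\alpha,\gamma)$ and $M_C(\gamma,\beta)$ along their common overlap $\gamma$, parse via $A\to BC$, and correct the double-counting with $-f(\gamma)$. You additionally supply the feasibility check against $E[\Pi_{j_\beta}]$ and the justification of $f(x)=f(y)+f(z)-f(\gamma)$ via the ancestor structure of $G[\Pi_{j_\gamma}]$, both of which the paper's three-sentence proof leaves implicit.
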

\begin{proof}
Suppose that $x^1_{i_{\alpha}:j_{\gamma}}$ is an optimal argument for message $M_{B}\left(\alpha,\gamma\right)$ and
$x^2_{i_{\gamma}:j_{\beta}}$ is an optimal argument for message $M_{C}\left(\gamma,\beta\right)$. Since both these arguments are equal to $w_{\gamma}$ on their intersection, there is $x_{i_{\alpha}:j_{\beta}}$ such that $x^1_{i_{\alpha}:j_{\gamma}}=x_{i_{\alpha}:j_{\gamma}}$ and $x^2_{i_{\gamma}:j_{\beta}}=x_{i_{\gamma}:j_{\beta}}$. Then if we use  $x_{i_{\alpha}:j_{\beta}}$
as an argument for message $M_{A}\left(\alpha,\beta\right)$ and parse $x_{j_{\alpha}+1:j_{\beta}}$ by applying rule $A\rightarrow BC$ and then deriving
$x_{j_{\alpha}+1:j_{\gamma}}$ from $B$ and $x_{j_{\gamma}+1:j_{\beta}}$ from $C$  (optimally, i.e. with minimal weight), we obtain an upper bound for $M_{A}\left(\alpha,\beta\right)$ that
equals $M_{B}\left(\alpha,\gamma\right) + M_{C}\left(\gamma,\beta\right)-f\left(\gamma\right)+\nu\left(r\right)$.
Here we needed to subtract $f\left(\gamma\right)$ in order to avoid counting some patterns twice.
\end{proof}

Let us now prove  by induction that Algorithm~\ref{alg:Gen} correctly computes all messages.
It can be checked that in step 2, the statement that $\beta = lsp(\alpha w)$, is equivalent that $\beta\in\Pi$ is such that $\beta\succ\alpha$, $j_{\beta}-j_{\alpha}=|w|$, $\alpha w = \ast\beta$ and $\alpha w\ne \gamma$ for any $(\beta,\gamma)\in E[\Pi_{j_{\beta}}]$. Or, equivalently, $x=\alpha w$ can serve as an argument for message  $M_A(\alpha,\beta)$ (see~\eqref{eq:Message} for the definition of the set of arguments).

Then we simply compute messages $M_A(\alpha,\beta)$ based on the assumption that
$f(x)+C_{\Gamma_{A}}(x_{j_{\alpha}+1:j_{\beta}})$ (see message definition) attains its minimum when we set $x=\alpha w$  and
$x_{j_{\alpha}+1:j_{\beta}} = w$ is parsed via rule $A\rightarrow w$. In this case, $C_{\Gamma_{A}}(x_{j_{\alpha}+1:j_{\beta}}) = \nu\left(A \rightarrow w\right)$
and $f(x) = f\left(\alpha w\right)$ which explains the expression to be computed.

Those messages for which we computed their values correctly in steps 1-2 will serve as an induction base. Now let us consider a  message $M_A(\alpha,\beta)$
for which steps 1-2 compute non-optimal value. This means that an optimal argument $x_{j_{\alpha}+1:j_{\beta}}$ should be parsed by applying first some rule $A\rightarrow BC$. Suppose that in an optimal parsing nonterminal $B$ is responsible for a word  $x_{j_{\alpha}+1:t}$
and nonterminal $C$ for residual  $x_{t+1:j_{\beta}}$. Consider all patterns that are satisfied on optimal $x$ and contain variable $x_{t}$.
Suppose that $\gamma$ is one of them which has the leftmost $i_{\gamma}$. Denote $\gamma' = \gamma_{i_{\gamma}:t}$. Then $x_{i_{\alpha}:t}$ is also an optimal argument for message $M_{B}\left(\alpha,\gamma'\right)$ and $x_{t+1-|\gamma'|:j_{\beta}}$ is an optimal argument for $M_{C}\left(\gamma',\beta\right)$. It is easy to check that, otherwise, we could change $x$ in segment $x_{i_{\alpha}:t}$ (or in segment $x_{t+1-|\gamma'|:j_{\beta}}$) to a locally optimal that will improve an overall value. Therefore,
\[
M_{A}\left(\alpha,\beta\right) \hspace{-2pt}=\hspace{-2pt} M_{B}\left(\alpha,\gamma'\right)\hspace{-1pt}+\hspace{-1pt}M_{C}\left(\gamma',\beta\right)\hspace{-1pt} -\hspace{-1pt} f\left(\gamma'\right)\hspace{-1pt}+\hspace{-1pt}\nu\left(A\rightarrow BC\right)
\]
Together with Lemma~\ref{lemma:Gen} this implies that $M_{A}\left(\alpha,\beta\right)$ is equal to the right side of \eqref{eq:alg:update}.

Now we only need to notice that we reduced the computation of a message to other messages with smaller $j_{\beta}-j_{\alpha}$. Therefore, correctness of the computation depends on whether we calculate correctly messages $M_{A}\left(\alpha,\beta\right)$ with such small values of $j_{\beta}-j_{\alpha}$, that a part of their optimal argument $x_{j_{\alpha}+1:j_{\beta}}$ cannot be parsed by a rule of the form $A\rightarrow BC$ (i.e. cannot be split). But, since we already know that we computed such messages correctly, we conclude that all final message values are correct.

%A proof is given in the supplementary material.

\subsection{Algorithm's complexity}
The complexity of step 1 is $O\left((nL)^2\right)$.
In step 2 we go through at most $nL|R|$  different pairs $\left(A\rightarrow w,\alpha\right)$: $|R|$ for $A\rightarrow w$, $nL$ for $\alpha$. Since $w_{\beta}$ is a function of $w_{\alpha}$ and $w$, it can be computed during preprocessing. It can be seen that with proper preprocessing of the set of pairs $\left\{\left(A\rightarrow w,w_{\alpha}\right)\right\}$ we can compute $M_A(\alpha,\beta)$ in step 2 in time $O(1)$. Given that we already computed $f(\alpha w)$ according to Lemma~\ref{lemma:varphis}, the complexity of the step 2 is bounded by $O\left(nL|R|\right)$.

Finally, the complexity of step 3 can be bounded by $O\left(|R| |\Pi|^3\right) = O(|R|\cdot (nL)^3)$ which clearly dominates two previous steps and the computation of values in Lemma~\ref{lemma:varphis}. Therefore, the algorithm's overall complexity is $O\left(|R|(nL)^3\right)$.

\fi

%%%%%%%%%%%%%%%%%%%%%%%%%%%%%%%%%%%%%%%%%%%%%%%%%%%%%%%%%%%%%%%%%%%%%%%%%%%%%%%%%%%%%%%%%%%%%
%%%%%%%%%%%%%%%%%%%%%%%%%%%%%%%%%%%%%%%%%%%%%%%%%%%%%%%%%%%%%%%%%%%%%%%%%%%%%%%%%%%%%%%%%%%%%
%%%%%%%%%%%%%%%%%%%%%%%%%%%%%%%%%%%%%%%%%%%%%%%%%%%%%%%%%%%%%%%%%%%%%%%%%%%%%%%%%%%%%%%%%%%%%
%%%%%%%%%%%%%%%%%%%%%%%%%%%%%%%%%%%%%%%%%%%%%%%%%%%%%%%%%%%%%%%%%%%%%%%%%%%%%%%%%%%%%%%%%%%%%
%%%%%%%%%%%%%%%%%%%%%%%%%%%%%%%%%%%%%%%%%%%%%%%%%%%%%%%%%%%%%%%%%%%%%%%%%%%%%%%%%%%%%%%%%%%%%
%%%%%%%%%%%%%%%%%%%%%%%%%%%%%%%%%%%%%%%%%%%%%%%%%%%%%%%%%%%%%%%%%%%%%%%%%%%%%%%%%%%%%%%%%%%%%
%%%%%%%%%%%%%%%%%%%%%%%%%%%%%%%%%%%%%%%%%%%%%%%%%%%%%%%%%%%%%%%%%%%%%%%%%%%%%%%%%%%%%%%%%%%%%
%%%%%%%%%%%%%%%%%%%%%%%%%%%%%%%%%%%%%%%%%%%%%%%%%%%%%%%%%%%%%%%%%%%%%%%%%%%%%%%%%%%%%%%%%%%%%
%%%%%%%%%%%%%%%%%%%%%%%%%%%%%%%%%%%%%%%%%%%%%%%%%%%%%%%%%%%%%%%%%%%%%%%%%%%%%%%%%%%%%%%%%%%%%
%%%%%%%%%%%%%%%%%%%%%%%%%%%%%%%%%%%%%%%%%%%%%%%%%%%%%%%%%%%%%%%%%%%%%%%%%%%%%%%%%%%%%%%%%%%%%
%%%%%%%%%%%%%%%%%%%%%%%%%%%%%%%%%%%%%%%%%%%%%%%%%%%%%%%%%%%%%%%%%%%%%%%%%%%%%%%%%%%%%%%%%%%%%
%%%%%%%%%%%%%%%%%%%%%%%%%%%%%%%%%%%%%%%%%%%%%%%%%%%%%%%%%%%%%%%%%%%%%%%%%%%%%%%%%%%%%%%%%%%%%
%%%%%%%%%%%%%%%%%%%%%%%%%%%%%%%%%%%%%%%%%%%%%%%%%%%%%%%%%%%%%%%%%%%%%%%%%%%%%%%%%%%%%%%%%%%%%
%%%%%%%%%%%%%%%%%%%%%%%%%%%%%%%%%%%%%%%%%%%%%%%%%%%%%%%%%%%%%%%%%%%%%%%%%%%%%%%%%%%%%%%%%%%%%
%%%%%%%%%%%%%%%%%%%%%%%%%%%%%%%%%%%%%%%%%%%%%%%%%%%%%%%%%%%%%%%%%%%%%%%%%%%%%%%%%%%%%%%%%%%%%
%%%%%%%%%%%%%%%%%%%%%%%%%%%%%%%%%%%%%%%%%%%%%%%%%%%%%%%%%%%%%%%%%%%%%%%%%%%%%%%%%%%%%%%%%%%%%
%%%%%%%%%%%%%%%%%%%%%%%%%%%%%%%%%%%%%%%%%%%%%%%%%%%%%%%%%%%%%%%%%%%%%%%%%%%%%%%%%%%%%%%%%%%%%

\section{Algorithm for interaction grammars}
We will now describe an algorithm for interaction grammars of depth $d$,  $\Gamma^d (P)$.
Recall that only rules of the form $S^k \rightarrow u S^{k-1} v, (u,v)\in P$ can have nonzero weight; we denote this weight as $\theta^k_{u,v}$.
Unlike the general grammar case, we will not convert the grammar into a $\CNF$, but will use the initial form of it. Our algorithm is based on computing the same set of
messages $M_{S^k}\left(\alpha,\beta\right), k=0,\ldots,d$, which we will now denote simpler as $M_{k}\left(\alpha,\beta\right)$.

We will successively compute messages $M_{k}\left(\alpha,\beta\right)$ for $k=0,1,\cdots,d$. At the $k$th iteration we first compute message
$M_{k}\left(\alpha,\beta\right)$ based on the assumption that optimal parsing of $x_{j_{\alpha}+1:j_{\beta}}$ starts by applying some rule $S^{k}\rightarrow uS^{k-1}v$
(step 3 of Algorithm~\ref{alg:Int} below).
This part of computation can be done relatively fast. Then we have to update already computed messages $M_{k}\left(\alpha,\beta\right)$ based on the assumption that optimal parsing starts by dividing the argument $x_{j_{\alpha}+1:j_{\beta}}$ into pieces (by applying the rule $S^k\rightarrow S^kS^k$) (step 4). As we will see, this is equivalent to computing shortest paths in a graph with vertex set $\Pi$. 
%Details are given in Algorithm~\ref{alg:Int}.

\begin{algorithm}
\caption{Computing minimum of $F(x)$}\label{alg:Int}
\begin{algorithmic}[1]
\STATE {\bf for} each $\alpha,\beta\in\Pi:\beta\succ\alpha$ compute $M_0\left(\alpha,\beta\right)$
\FOR{$k=1,\ldots,d$}
\STATE {\bf for} each $\alpha,\beta\in\Pi:\beta\succ\alpha$
\begin{equation}
\label{eq:Message11}
\begin{array}{*{20}{c}}
\hspace{-21pt}M_k\left(\alpha,\beta\right)\hspace{-3pt}:= \hspace{-10pt}\min\limits_{(u,v,\gamma)\in \Omega\left(\beta\right)}\hspace{-7pt}M_{k-1}\left(lsp\left(\alpha u\right),\gamma\right)+\theta^k_{u,v}+\\
+f\left(\alpha u\right)-f\left(lsp\left(\alpha u\right)\right) + f\left(\gamma v\right) - f\left(\gamma\right) \\
\end{array}
\end{equation}
where $\Omega\left(\beta\right) \hspace{-3pt}=\hspace{-3pt} \{(u,v,\gamma)|(u,v)\hspace{-3pt}\in\hspace{-3pt} P, x_{\beta}=\ast v; \gamma\hspace{-3pt}\in\hspace{-3pt} \Pi, \gamma \hspace{-3pt}=\hspace{-3pt} \ast\beta_{1:|\beta|-|v|}, \forall (\beta,\delta)\hspace{-3pt}\in\hspace{-3pt} E[\Pi_{j_{\beta}}]\,\,\gamma \ne \ast\delta_{1:|\delta|-|v|}\}$.
\STATE {\bf compute} all-pairs shortest paths (denoted $SP\left(\alpha,\beta\right)$) for oriented graph $G=\left(\Pi,\succ\right)$ with costs of edges $c(\alpha,\beta):=M_k\left(\alpha,\beta\right) - f\left(\beta\right)$
\STATE {\bf for} each $\alpha,\beta\in\Pi:\beta\succ\alpha$
\begin{equation}
M_k\left(\alpha,\beta\right):=  SP\left(\alpha,\beta\right) + f\left(\beta\right)
\label{eq:SP}
\end{equation}
\ENDFOR
\STATE {\bf return}
$
M:=\min\limits_{k=\overline{0,d}}\min\limits_{\alpha\in\Pi_{n}} M_{k}(\varepsilon_0,\alpha)
$
\end{algorithmic}
\end{algorithm}
%\vspace{-7pt}
\begin{theorem} Algorithm~\ref{alg:Int} is correct, i.e. it
returns the correct value of $M=\min_x F(x)$ for an interaction grammar of depth $d$.
\label{th:correctnessInt}
\end{theorem}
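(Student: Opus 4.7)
I would prove correctness by induction on the iteration index $k$, showing that after iteration $k$ the messages $M_k(\alpha,\beta)$ computed by Algorithm~\ref{alg:Int} coincide with the true values $\widetilde M_k(\alpha,\beta)$ given by definition~\eqref{eq:Message} specialized to $A=S^k$. For the base case $k=0$, every production with $S^0$ on the left ($S^0\to S^0 S^0$ and $S^0\to a$) has weight zero, so $C_{\Gamma_{S^0}}(y)=0$ for every $y\in D^{\ast}$; hence $\widetilde M_0(\alpha,\beta)$ reduces to $\min f(x)$ subject to the constraints of~\eqref{eq:Message}, which step~1 is designed to compute directly (either by a standalone pattern-based dynamic program or by running the rest of the loop with $P=\varnothing$).

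For the inductive step, fix $k\ge 1$, assume $M_{k-1}$ is correct, and consider an optimal pair $(x^\ast,\lambda^\ast)$ attaining $\widetilde M_k(\alpha,\beta)$. The root of $\lambda^\ast$ must use one of three rule types. If the root rule is $S^k\to u S^{k-1}v$ with $(u,v)\in P$, I set $\alpha'=lsp(\alpha u)$ and let $\gamma$ be the longest suffix in $\Pi$ at the boundary between the $S^{k-1}$-derivation and $v$; the inductive hypothesis identifies the cost of the middle derivation with $M_{k-1}(\alpha',\gamma)$. Decomposing $f(x^\ast)$ into patterns ending inside $\alpha u$, patterns already counted by $M_{k-1}(\alpha',\gamma)$, and patterns ending inside $\gamma v$, the correction terms $f(\alpha u)-f(\alpha')$ and $f(\gamma v)-f(\gamma)$ in~\eqref{eq:Message11} cancel precisely the overlaps, matching step~3's formula. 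The analogue of Lemma~\ref{lemma:Gen} (any $(u,v,\gamma)\in\Omega(\beta)$ yields an admissible upper bound) supplies the reverse inequality.

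If the root rule is $S^k\to S^k S^k$, I unfold the recursion to obtain a maximal decomposition $x^\ast_{j_\alpha+1:j_\beta}=y_1y_2\cdots y_m$ whose interface patterns $\alpha=\gamma_0\prec\gamma_1\prec\cdots\prec\gamma_m=\beta$ are taken as the longest suffixes in $\Pi$ at the split positions. I would show that the total cost equals $\sum_{i=1}^{m} M_k(\gamma_{i-1},\gamma_i)-\sum_{i=1}^{m-1} f(\gamma_i)$ by checking that every pattern of $\Pi^\circ$ contained in $x^\ast$ is attributed to exactly one segment once the patterns contained in an interface are subtracted once. This is the length of a directed path $\gamma_0\to\cdots\to\gamma_m$ in $G=(\Pi,\succ)$ with edge weights $c(\cdot,\cdot)=M_k(\cdot,\cdot)-f(\cdot)$, shifted by $+f(\beta)$ as in~\eqref{eq:SP}. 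Therefore step~4 realizes the minimum over such splits. The remaining rule $S^k\to S^{k-1}$ at the root contributes exactly $\widetilde M_{k-1}(\alpha,\beta)$; this is captured by the outer minimum $\min_{k=\overline{0,d}}$ in step~7, and intermediate applications of $S^{k'}\to S^{k'-1}$ inside larger derivations fold into cases (a) or (b) at the lower level without changing the cost since that rule has weight zero.

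The main obstacle I anticipate is the one-to-one attribution of patterns in the split case. The argument depends crucially on $\Pi$ being the prefix set~\eqref{eq:GLADGAKGADF} and on $\gamma_i$ being the \emph{longest} suffix in $\Pi_{j_{\gamma_i}}$, so that any $\alpha'\in\Pi^\circ$ contained in $x^\ast$ that crosses a split position must itself be contained in the corresponding interface $\gamma_i$. Combining this structural observation with the additive identities for $f$ and $\phi$ from Lemma~\ref{lemma:varphis}, and invoking standard correctness of Floyd--Warshall/Dijkstra for the shortest-path subroutine, would then close the induction.
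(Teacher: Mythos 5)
Your proposal follows essentially the same route as the paper: the paper's proof is exactly your induction on $k$, split into a lemma for the $S^k\to uS^{k-1}v$ case (the inclusion--exclusion cancellation of $f(\alpha u)-f(lsp(\alpha u))$ and $f(\gamma v)-f(\gamma)$, with $\gamma$ the longest suffix in $\Pi$ at the boundary) and a lemma identifying the unfolded $S^k\to S^kS^k$ splits with paths $\gamma_0\prec\cdots\prec\gamma_m$ in $G=(\Pi,\succ)$ of weight $\sum_i M_k(\gamma_{i-1},\gamma_i)-f(\gamma_i)$, plus the observation that step~7's minimum over $k$ absorbs the unit rules. Your key structural point --- that a pattern crossing a split position must be contained in the longest-suffix interface $\gamma_i$, which is what makes the attribution of pattern costs well defined --- is precisely the argument the paper uses, so the proposal matches.
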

%\vspace{-7pt}
\ifTR
\subsection{Proof of Theorem~\ref{th:correctnessInt}}
\begin{lemma}\label{lemma:IntH}  Suppose that $\alpha,\beta\in\Pi:\beta\succ\alpha$ and optimal argument $x_{i_{\alpha}:j_{\beta}}$ for message $M_{k}\left(\alpha,\beta\right)$ are such that optimal parsing of $x_{j_{\alpha}+1:j_{\beta}}$ from nonterminal $S^k$ starts by applying the rule $S^k \rightarrow u S^{k-1} v$. Then,
\begin{equation}
\begin{array}{*{20}{c}}
M_k\left(\alpha,\beta\right)= \min\limits_{\gamma\in\Omega}M_{k-1}\left(lsp\left(\alpha u\right),\gamma\right) +\theta^k_{u,v}+\\+f\left(\alpha u\right)-f\left(lsp\left(\alpha u\right)\right)+f\left(\gamma v\right)- f\left(\gamma\right) \\
\end{array}
\end{equation}
where
%\vspace{-7pt}
\[
\Omega  = \left\{ {\begin{array}{*{20}{lr}}
  {\gamma  \in \Pi |\gamma  =  * {\beta _{1:|\beta | - |v|}},\forall (\beta ,\delta ) \in E[{\Pi _{{j_\beta }}}],} \\
  {\gamma  \ne  * {\delta _{1:|\delta | - |v|}}}
\end{array}} \right\}
\]
\end{lemma}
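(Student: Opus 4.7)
My plan is to prove the equality by establishing both inequalities, mirroring the structure of Lemma~\ref{lemma:Gen}.

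\textbf{Upper bound ($\leq$).} For each $\gamma\in\Omega$ I would exhibit a feasible argument for $M_k(\alpha,\beta)$ whose cost matches the claimed right-hand side. Writing $\alpha'=lsp(\alpha u)$ and letting $y^*$ be an optimal argument for $M_{k-1}(\alpha',\gamma)$, I glue $x := \alpha\, u\, y^*_{j_{\alpha'}+1:j_\gamma}\, v$ and verify that $x$ satisfies the three conditions defining arguments of $M_k(\alpha,\beta)$. The conditions $x=\alpha\ast$ and $x=\ast\beta$ are immediate, the latter following from $\gamma v=\ast\beta$. The remaining maximality constraint---that for every $(\beta,\delta)\in E[\Pi_{j_\beta}]$, $x\ne\ast\delta$---is proved by contradiction: $x=\ast\delta$ forces $\delta_{1:|\delta|-|v|}$ to be a suffix of $x_{\cdot:j_\gamma}$, and a length comparison against $|\gamma|$ (using either a direct inclusion, or the tree $G[\Pi_{j_\gamma}]$ together with the $M_{k-1}$ maximality constraint on $\gamma$) yields $\gamma=\ast\delta_{1:|\delta|-|v|}$, which is precisely what the second defining condition of $\Omega$ excludes. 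The chosen parsing---first apply $S^k\to uS^{k-1}v$, then use $y^*$'s parsing on the middle---contributes $\theta^k_{u,v}+C_{\Gamma_{S^{k-1}}}(y^*_{j_{\alpha'}+1:j_\gamma})$, so everything reduces to the $f$-identity $f(x) = f(\alpha u) - f(\alpha') + f(y^*) + f(\gamma v) - f(\gamma)$.

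\textbf{Lower bound under the assumption ($\geq$).} Starting from an optimal $x^*$ whose parsing begins with $S^k\to uS^{k-1}v$, I choose $\gamma$ to be the longest element of $\Pi_{j_\beta-|v|}$ that is a suffix of $x^*_{\cdot:j_\beta-|v|}$. Membership in $\Omega$ is then verified: $\beta_{1:|\beta|-|v|}$ is itself in $\Pi_{j_\beta-|v|}$ and a suffix of $x^*$, so $|\gamma|\ge|\beta_{1:|\beta|-|v|}|$ and the first defining condition of $\Omega$ holds; the second is again proved by contradiction, since $\gamma=\ast\delta_{1:|\delta|-|v|}$ together with $x^*_{j_\gamma+1:j_\beta}=v$ would force $x^*=\ast\delta$, violating the maximality of $\beta$ built into the definition of $M_k(\alpha,\beta)$. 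Setting $y:=x^*_{i_{\alpha'}:j_\gamma}$ gives a feasible argument for $M_{k-1}(\alpha',\gamma)$, and optimality of $x^*$'s middle parsing combined with the same $f$-identity as above yields $M_k(\alpha,\beta)\ge\text{RHS}$ for this particular $\gamma$, hence for the minimum.

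\textbf{Where the real work lies.} The crux of both directions is the $f$-identity, which reduces to showing that every pattern $\delta\in\Pi^\circ$ contained in $x$ contributes to exactly one of the five $f$-terms on the right. A case split on the endpoints $(i_\delta,j_\delta)$ reduces this to two geometric claims: if $j_\delta>j_{\alpha'}$ then $i_\delta\ge i_{\alpha'}$, and if $j_\delta>j_\gamma$ then $i_\delta\ge i_\gamma$. The first claim follows because the prefix of $\delta$ ending at $j_{\alpha'}$ is a member of $\Pi_{j_{\alpha'}}$ that is a suffix of $\alpha u$, hence has length at most $|lsp(\alpha u)|=|\alpha'|$. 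The second claim is the subtler one: the analogous prefix lies in $\Pi_{j_\gamma}$ and is a suffix of $x_{\cdot:j_\gamma}=y$, so navigating the tree $G[\Pi_{j_\gamma}]$ together with the constraint on $\gamma$ in $M_{k-1}(\alpha',\gamma)$ (no immediate successor of $\gamma$ in the tree is a suffix of $y$) forces its length not to exceed $|\gamma|$. I expect this bookkeeping---coupling the tree structure of $G[\Pi_{j_\gamma}]$ with the $M_{k-1}$ maximality constraint---to be the main technical obstacle.
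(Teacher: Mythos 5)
Your proposal follows essentially the same route as the paper's proof: identify $\gamma$ as the longest suffix of $x_{\cdot:j_\beta-|v|}$ in $\Pi_{j_\beta-|v|}$, reduce everything to the inclusion--exclusion identity for $f$ (the paper's equation~\eqref{eq:ExcInc}), and prove that identity by the same endpoint case analysis using $lsp(\alpha u)$ and the maximality of $\gamma$. The only difference is presentational: you make the two inequalities explicit, whereas the paper argues by a chain of equalities for the particular longest-suffix $\gamma$ and then appeals to the fact that \eqref{eq:ExcInc} holds for all admissible variations of $x$ to justify taking the minimum over $\Omega$ --- your treatment of the upper-bound direction (gluing an optimal argument of $M_{k-1}(lsp(\alpha u),\gamma)$ and checking feasibility via the tree $G[\Pi_{j_\gamma}]$) is, if anything, slightly more careful than the paper's.
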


\begin{proof}
Since first rule in the parsing of $x_{j_{\alpha}+1:j_{\beta}}$ is $S^k \rightarrow u S^{k-1} v$, then $x_{j_{\alpha}+1:j_{\beta}} = \ast v$. Therefore,
$w_{\beta} = \ast v$ or $+ w_{\beta} =  v$. The second option does not hold because we require that $\forall (\beta,\delta)\in E[\Pi_{j_{\beta}}]\,\,
x_{i_{\alpha}:j_{\beta}}\ne \ast\delta$.

Suppose now that $\gamma$ is the longest pattern from $\Pi_{j_{\beta}-|v|}$ for which $x_{i_{\alpha}:j_{\beta}-|v|} = \ast\gamma$.
According to the definition, $M_k\left(\alpha,\beta\right)= f\left(x_{i_{\alpha}:j_{\beta}}\right)+C_{\Gamma_{S^k}}\left(x_{j_{\alpha}+1:j_{\beta}}\right)$. We know that $C_{\Gamma_{S^k}}\left(x_{j_{\alpha}+1:j_{\beta}}\right) = \theta^k_{u,v}+C_{\Gamma_{S^{k-1}}}\left(x_{j_{\alpha}+1+|u|:j_{\beta}-|v|}\right)$. Now we will prove that
\begin{equation}
\label{eq:ExcInc}
\begin{array}{*{20}{c}}
f\left(x_{i_{\alpha}:j_{\beta}}\right) = f\left(x_{j_{\alpha}+1+|u|-|lsp(\alpha u)|:j_{\beta}-|v|}\right) + f\left(\alpha u\right)+\\
+f\left(\gamma v\right) -f\left(lsp\left(\alpha u\right)\right)- f\left(\gamma\right)\\
\end{array}
\end{equation}
First check that if a pattern $\delta\in\Pi_0$ is contained in $x_{i_{\alpha}:j_{\beta}}$ and is not counted in the first term of~\eqref{eq:ExcInc}, then we have only 2 options: either $\delta$ is contained in  $1)$ $\alpha u$, or in $2)$ $\gamma v$. Indeed, $\delta$ is not inside $\left[j_{\alpha}+1+|u|-|lsp(\alpha u)|,j_{\beta}-|v|\right]$ only if either $a)$ $j_{\delta} < j_{\alpha}+1+|u|-|lsp(\alpha u)|$ (then, obviously, $\delta$ is in $\alpha u$), or $b)$ $i_{\delta} > j_{\beta}-|v|$ (then $\delta$ is in $\gamma v$), or $c)$ $j_{\alpha}+|u|-|lsp(\alpha  u)|,j_{\alpha}+1+|u|-|lsp(\alpha u)| \in \left[i_{\delta}, j_{\delta}\right]$, or $d)$ $j_{\beta}-|v|,j_{\beta}-|v|+1\in \left[i_{\delta}, j_{\delta}\right]$. In case $c)$ $\delta$ is not in $\alpha u$ only if $j_{\delta} > j_{\alpha}+|u|$ which implies that  $\ast\delta_{i_{\delta}:j_{\alpha}+|u|} =  \alpha u, \delta_{i_{\delta}:j_{\alpha}+|u|}\in\Pi$, and therefore $\ast\delta_{i_{\delta}:j_{\alpha}+|u|} =  lsp( \alpha u)$, which contradicts that $\delta$ contains $x_{j_{\alpha}+|u|-|lsp(\alpha u)|}$. In  case $d)$ $\delta_{i_{\delta}:j_{\beta}-|v|}\in\Pi$ and by definition of $\gamma$,  $\gamma=\ast\delta_{i_{\delta}:j_{\beta}-|v|}$, which implies that $\delta$ is in $\gamma v$.

Formula~\eqref{eq:ExcInc} is obtained from inclusion-exclusion principle. Indeed, patterns in $x_{i_{\alpha},j_{\beta}}$, as we proved, is a union of patterns in  $x_{j_{\alpha}+1+|u|-|lsp(\alpha u)|:j_{\beta}-|v|}$, $\alpha u$ and $\gamma v$. Moreover, intersection of a set of patterns in $x_{j_{\alpha}+1+|u|-|lsp(\alpha u)|:j_{\beta}-|v|}$ and a set of patterns in $\alpha u$ ($\gamma v$) is a set of patterns in $lsp(\alpha u)$ ($\gamma$). The last thing we need to check is that patterns that are in both $\alpha u$ and in $\gamma v$ are all in $x_{j_{\alpha}+1+|u|-|lsp(\alpha u)|:j_{\beta}-|v|}$. It is easy to see that this is equivalent to $[i_{\gamma},j_{\alpha}+|u|]\subseteq [j_{\alpha}+1+|u|-|lsp(\alpha u)|,j_{\beta}-|v|]$ (if the first interval is empty then statement is obvious).The fact that $j_{\alpha}+|u| < j_{\beta}-|v|$ is obvious. Since $\gamma_{i_{\gamma}:j_{\alpha}+|u|}\in\Pi$ and $\alpha u = \ast\gamma_{i_{\gamma}:j_{\alpha}+|u|}$, then by definition of $lsp(\alpha u)$ we conclude that $i_{\gamma}>j_{\alpha}+|u|-lsp(\alpha u)$. Equality~\eqref{eq:ExcInc} is proved.

Now we see that
\begin{equation}
\begin{array}{*{20}{lr}}
M_k\left(\alpha,\beta\right) =f\left(x_{j_{\alpha}+1+|u|-|lsp(\alpha u)|:j_{\beta}-|v|}\right) \\
~~ +C_{\Gamma_{S^{k-1}}}\left(x_{j_{\alpha}+1+|u|:j_{\beta}-|v|}\right) +\theta^k_{u,v}+f\left(\alpha u\right)+f\left(\gamma v\right) \\
~~ -f\left(lsp\left(\alpha u\right)\right)- f\left(\gamma\right)\\
\end{array}
\end{equation}
Note that $x_{j_{\alpha}+1+|u|-|lsp(\alpha u)|:j_{\beta}-|v|}$ can serve as an argument for message
$M_{k-1} (lsp(\alpha u),  \gamma)$. Indeed, by definition of $\gamma$, for any $(\gamma,\delta)\hspace{-3pt}\in\hspace{-3pt} \Pi_{j_{\gamma}},\hspace{-1pt}  x_{j_{\alpha}+1\hspace{-1pt}+\hspace{-1pt}|u|\hspace{-1pt}-\hspace{-1pt}|lsp(\alpha u)|:j_{\beta}\hspace{-1pt}-\hspace{-1pt}|v|}\ne\ast\delta$ and we can consider any parsing of $x_{j_{\alpha}+1+|u|:j_{\beta}-|v|}$ from $S^{k-1}$. It can be shown that $f\left(x_{j_{\alpha}+1+|u|-|lsp(\alpha u)|:j_{\beta}-|v|}\right)+ C_{\Gamma_{S^{k-1}}}\left(x_{j_{\alpha}+1+|u|:j_{\beta}-|v|}\right)$ is  equal to
$M_{k-1}\left(lsp(\alpha u), \gamma\right)$, since formula~\eqref{eq:ExcInc} holds for any variations of $x$ for which $\alpha,\beta,\gamma$ preserve their properties. Therefore we conclude that $M_k\left(\alpha,\beta\right)= M_{k-1}\left(lsp\left(\alpha u\right),\gamma\right) +\theta^k_{u,v}+f\left(\alpha u\right)-f\left(lsp\left(\alpha u\right)\right)+f\left(\gamma v\right)- f\left(\gamma\right)$. To obtain the final
formula we need to add minimum over all $\gamma$.
\end{proof}

\begin{lemma}\label{lemma:IntV}  Let $G=\left(\Pi,\succ\right)$ be a weighted oriented graph with costs of edges $c(\alpha,\beta)={\widetilde M_k}\left(\alpha,\beta\right) - f\left(\beta\right)$, where ${\widetilde M_k}\left(\alpha,\beta\right)$ are values of messages after step 3 of Algorithm~\ref{alg:Int}. Suppose that $\alpha,\beta\in\Pi:\beta\succ\alpha$. Then, $M_{k}\left(\alpha,\beta\right) = SP\left(\alpha,\beta\right)+f\left(\beta\right)$, where $ SP\left(\alpha,\beta\right)$ is the length of a shortest path from $\alpha$ to $\beta$ in graph $G$.
\end{lemma}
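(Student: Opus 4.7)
The plan is to establish the equality $M_k(\alpha,\beta) = SP(\alpha,\beta) + f(\beta)$ by showing both inequalities, interpreting each directed path in $G$ as a derivation obtained by splicing together subderivations with a sequence of applications of the splitting rule $S^k \to S^k S^k$.

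First I would establish the key bookkeeping identity: for any chain $\alpha = \gamma_0 \prec \gamma_1 \prec \cdots \prec \gamma_m = \beta$ in $\Pi$, telescoping gives
\[
\sum_{i=1}^m \widetilde M_k(\gamma_{i-1},\gamma_i) - \sum_{i=1}^{m-1} f(\gamma_i) \;=\; \sum_{i=1}^m c(\gamma_{i-1},\gamma_i) + f(\beta).
\]
The subtraction of $f(\gamma_i)$ at internal junctions is exactly the inclusion-exclusion correction needed when gluing $m$ subderivations whose underlying intervals share the patterns $\gamma_i$: patterns wholly inside $\gamma_i$ are counted in both $\widetilde M_k(\gamma_{i-1},\gamma_i)$ and $\widetilde M_k(\gamma_i,\gamma_{i+1})$, and this is the same accounting used in the proof of Lemma~\ref{lemma:IntH}.

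For the upper bound $M_k(\alpha,\beta) \leq SP(\alpha,\beta) + f(\beta)$, I would take any directed path $\alpha = \gamma_0 \prec \gamma_1 \prec \cdots \prec \gamma_m = \beta$ in $G$. Each edge corresponds, by definition of $\widetilde M_k$, to a valid assignment and derivation of the fragment $x_{j_{\gamma_{i-1}}+1:j_{\gamma_i}}$ from $S^k$ beginning with a non-splitting rule. Because consecutive fragments agree on the shared pattern $\gamma_i$, they can be glued into a single assignment for $x_{i_\alpha:j_\beta}$; applying the rule $S^k \to S^k S^k$ a total of $m-1$ times joins the corresponding subderivations into one derivation of $x_{j_\alpha+1:j_\beta}$ from $S^k$. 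By the identity above the total cost equals the path length plus $f(\beta)$, so minimizing over paths yields the bound.

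For the lower bound, I would take an optimal argument $x$ and an optimal derivation realizing $M_k(\alpha,\beta)$, and unroll all outermost applications of $S^k \to S^k S^k$ at the root of the parse tree. This produces split positions $j_\alpha = t_0 < t_1 < \cdots < t_m = j_\beta$ such that each fragment $x_{t_{i-1}+1:t_i}$ is derived from $S^k$ by first applying a non-splitting rule. At each internal split $t_i$, I would let $\gamma_i$ be the longest pattern in $\Pi_{t_i}$ that is a suffix of $x_{i_\alpha:t_i}$, mirroring the construction used in the message definition. A standard local-improvement argument shows each fragment must realize the value $\widetilde M_k(\gamma_{i-1},\gamma_i)$, for otherwise replacing that fragment by its optimum would strictly decrease $M_k(\alpha,\beta)$. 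Applying the identity once more gives $M_k(\alpha,\beta) = \sum_i c(\gamma_{i-1},\gamma_i) + f(\beta) \geq SP(\alpha,\beta) + f(\beta)$.

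The main obstacle is the lower bound: verifying that the extracted $\gamma_i$ form a valid path in $G$, i.e.\ that $\gamma_i \succ \gamma_{i-1}$ for every $i$. This reduces to checking that $\gamma_i$ is a proper suffix of $x_{i_{\gamma_{i-1}}:t_i}$, which in turn follows from $\gamma_{i-1}$ being a suffix of $x_{i_\alpha:t_{i-1}}$ together with the maximality in the choice of $\gamma_i$. Once this is in place, the inclusion-exclusion identity (whose verification parallels equation~\eqref{eq:ExcInc} in the proof of Lemma~\ref{lemma:IntH}, applied at each glue point) completes the argument.
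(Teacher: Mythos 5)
Your proposal is correct and follows essentially the same route as the paper's proof: both directions are handled by identifying paths $\alpha=\gamma_0\prec\cdots\prec\gamma_m=\beta$ in $G$ with derivations obtained by unrolling the outermost applications of $S^k\to S^kS^k$, taking each $\gamma_i$ to be the longest pattern of $\Pi$ ending at the split position, and using the same $-f(\gamma_i)$ inclusion--exclusion correction that yields the identity $M_k(\alpha,\beta)-f(\beta)=\sum_i\bigl(\widetilde M_k(\gamma_{i-1},\gamma_i)-f(\gamma_i)\bigr)$. If anything, you are slightly more explicit than the paper about verifying $\gamma_i\succ\gamma_{i-1}$ and the local-optimality of each fragment, which the paper asserts without proof.
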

\begin{proof}
Suppose that optimal argument $x_{i_{\alpha}:j_{\beta}}$ for message $M_{k}\left(\alpha,\beta\right)$ is such that optimal parsing of $x_{j_{\alpha}+1:j_{\beta}}$ from nonterminal $S^k$ starts by applying the rule $S^k \rightarrow S^{k} S^{k}$, then on one of resulting $S^{k}$ we again apply the same rule and etc.: $S^k \rightarrow S^{k} S^{k}\rightarrow S^{k} S^{k} S^{k} \rightarrow \cdots \rightarrow \underbrace{S^{k}\cdots S^{k}}_{s\rm{\,\,times}}$, and after this each $S^k$ unfolds according to some $S^k \rightarrow u S^{k-1} v\in R$.

Suppose that $x_{j_{\alpha}+1:l_1}$ is parsed from the first $S^k$, $x_{l_1+1:l_2}$ is parsed from the second $S^k$ and etc., until $x_{l_{s-1}+1:l_s},l_s=j_{\beta},$ is parsed from the last $S^k$. Also, $\gamma_i$ is longest pattern in $\Pi_{l_i}$ for which $x_{i_{\alpha}:l_i}=\ast\gamma_i$. Clearly, $\gamma_s = \beta$. For completeness, $\gamma_0=\alpha$.
Then it is easy to check that $\gamma_i\succ \gamma_{i-1}$ and (we give it without proof)
\vspace{-7pt}
\begin{equation}
\label{eq:path}
M_{k}\left(\alpha,\beta\right)-f\left(\beta\right) =\sum\limits_{i=1}^s {\widetilde M_k}\left(\gamma_{i-1},\gamma_i\right) - f\left(\gamma_i\right)\\
\end{equation}
We subtract $f\left(\gamma_i\right)$ each time in order avoid counting some patterns twice (like it would be in ${\widetilde M_k}\left(\gamma_{i-1},\gamma_i\right)+{\widetilde M_k}\left(\gamma_{i},\gamma_{i+1}\right)$).

And visa versa, if we have a chain $\gamma_s\succ\gamma_{s-1}\succ\cdots\succ\gamma_0$, then it corresponds to some pattern $x_{i_{\gamma_0}:j_{\gamma_s}}$ that is defined by requirement that $x_{i_{\gamma_i}:j_{\gamma_{i+1}}}$ is an optimal argument for message ${\widetilde M_k}\left(\gamma_{i},\gamma_{i+1}\right)$. A parsing of $x_{j_{\gamma_0}+1:j_{\gamma_s}}$ consists of first applying  $s-1$ times the rule $S^k \rightarrow S^{k} S^{k}$, and deriving $x_{j_{\gamma_{i-1}}+1:j_{\gamma_{i}}}$ with minimal weight from $i$-th $S^k$. The cost of such parsing plus pattern-based part is equal to $f\left(\gamma_l\right)+\sum\limits_{i=1}^s {\widetilde M_k}\left(\gamma_{i-1},\gamma_i\right) - f\left(\gamma_i\right)$.

Now we see that an optimal argument with its parsing corresponds to a sum of the form~\eqref{eq:path} and each such sum corresponds to some argument and its parsing. Therefore, an optimal argument with its parsing for $M_{k}\left(\alpha,\beta\right)$ can be found by computing a shortest path from $\alpha$ to $\beta$ in $G$ and the value of message is defined by~\eqref{eq:SP}.
\end{proof}

%Proofs are given in the supplementary material.

Lemmas~\ref{lemma:IntH} and \ref{lemma:IntV} imply that in step 5 of Algorithm~\ref{alg:Int} we correctly
compute messages $M_{k}\left(\alpha,\beta\right)$. It remains to verify that the expression in step 7 gives indeed the value of the optimum $M$;
this fact follows directly from definitions.
\fi

\subsection{Analysis of complexity}
\ifTR
First let us estimate the complexity of step 1. 
A message $M_{0}\left(\alpha,\beta\right)$ does not include any grammatical part and can be computed with the same
techniques as in~\cite{TK:ICML}. 
There it was proved that all messages (in current notations) of the form $M_{0}\left(\varepsilon_{0},\beta\right)$ for all $\beta\in\Pi$ can be
computed in time $O\left(nL|D|\right)$. Now we only have to note that any $\alpha\in \Pi$ defines its own pattern-based potential on the interval of variables $[i_{\alpha},n]$ that includes all patterns from $\Pi^\circ$ that satisfy: a) a pattern interval is a subinterval of $[i_{\alpha},n]$;b) a pattern is consistent with $\alpha$ on the intersection with $[i_{\alpha},j_{\alpha}]$. We assume weights of patterns to be the same, except for $c_{\alpha} = -C + c^{old}_{\alpha}$ where $C$ is a large constant. For each such pattern-based potential $p_{\alpha}$ we can compute in $O\left(nL|D|\right)$ time all messages of the form $M^{p_{\alpha}}_0\left(\varepsilon_{i_{\alpha}-1},\beta\right)$ that will be equal to $M_{0}\left(\alpha,\beta\right)-C$. Therefore, all messages $M_{0}\left(\alpha,\beta\right)$ can be computed in  $O\left((nL)^2|D|\right)$ time.
\else
In the supplementary material we show that messages $M_{0}\left(\alpha,\beta\right)$ in step 1 can be computed in  $O\left((nL)^2|D|\right)$ time
(using techniques from~\cite{TK:ICML}), and operations in step 3 take $O\left(d(nL)^2|P|\right)$ time.
%Using techniques from~\cite{TK:ICML} it can be shown that all messages $M_{0}\left(\alpha,\beta\right)$ can be computed in  $O\left((nL)^2|D|\right)$ time. The complexity of step 3 is bounded by $O\left(d(nL)^2|P|\right)$. Details are given in the
%supplementary material.
\fi

\ifTR
Now let us turn to step 3 of the algorithm where we make vertical message passing. The complexity of this part is $O\left(dnL\sum\limits_{\beta\in\Pi}\left|\Omega\left(\beta\right)\right|\right)$: $d$ times in a loop, $O(nL)$ variants of choosing $\alpha$, and $\left|\Omega\left(\beta\right)\right|$ variants for choosing $(u,v,\gamma)$ given $\beta\in\Pi$.
\begin{lemma}
$\sum\limits_{\beta\in\Pi}\left|\Omega\left(\beta\right)\right| \leq |P| nL$.
\end{lemma}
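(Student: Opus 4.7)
The plan is to exchange the order of summation,
\[
\sum_{\beta\in\Pi}|\Omega(\beta)| \;=\; \sum_{(u,v)\in P}\#\{(\beta,\gamma)\in\Pi\times\Pi : (u,v,\gamma)\in\Omega(\beta)\},
\]
and then to establish a \emph{uniqueness} claim: for every fixed $(u,v)\in P$ and every fixed $\gamma\in\Pi$, at most one $\beta\in\Pi$ gives $(u,v,\gamma)\in\Omega(\beta)$. Combined with the bound $|\Pi|\le nL$ (each of the at most $L+1$ distinct prefixes of words in $\Lambda$ can occupy at most $n$ starting positions), this immediately yields $\sum_\beta|\Omega(\beta)|\le |P|\cdot|\Pi|\le |P|\,nL$.

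For the uniqueness claim I would first read off, from the definition of $\Omega$, that any valid $\beta$ is heavily constrained: since $j_\gamma=j_\beta-|v|$, the endpoint $s:=j_\beta$ is fixed; the last $|v|$ characters of $w_\beta$ must equal $v$; and the preceding $|\beta|-|v|$ characters must form a suffix of $\gamma$ (by $\gamma=\ast\beta_{1:|\beta|-|v|}$). Hence $\beta\in\Pi_s$ is determined entirely by its length. Suppose for contradiction that two distinct valid candidates $\beta_1,\beta_2$ existed, with $|\beta_1|<|\beta_2|$. Since both arise as suffixes of the concatenation $\gamma v$ of different lengths, $\beta_1$ is a proper suffix of $\beta_2$ as positioned patterns in $\Pi_s$, so $\beta_1$ is an ancestor of $\beta_2$ in the tree $G[\Pi_s]$.

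The main obstacle, and the only place where the maximality clause inside $\Omega$ enters, is to derive a contradiction from this ancestor relationship. I would take $\delta$ to be the unique child of $\beta_1$ on the tree path from $\beta_1$ to $\beta_2$, so that $(\beta_1,\delta)\in E[\Pi_s]$ and $\delta$ is itself a suffix of $\beta_2$. The same reasoning that forced $\beta_{2,\,1:|\beta_2|-|v|}$ to be a suffix of $\gamma$ then forces $\delta_{1:|\delta|-|v|}$ to be a suffix of $\gamma$, i.e.\ $\gamma=\ast\delta_{1:|\delta|-|v|}$, which directly violates the clause $\forall(\beta_1,\delta)\in E[\Pi_{j_{\beta_1}}]:\gamma\ne\ast\delta_{1:|\delta|-|v|}$ built into $\Omega(\beta_1)$. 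This contradiction closes the uniqueness argument, and the counting step above then yields the claimed bound.
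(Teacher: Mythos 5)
Your proof is correct and takes essentially the same approach as the paper: both arguments fix $(u,v)\in P$ and $\gamma\in\Pi$ and show that at most one $\beta\in\Pi$ can have $(u,v,\gamma)\in\Omega(\beta)$, then multiply by $|P|\cdot|\Pi|\le |P|\,nL$. The only difference is presentational — the paper identifies the unique candidate explicitly as $\beta=lsp(\gamma v)$ and leaves the verification as "it can be checked," while you establish the same uniqueness by a contradiction in the tree $G[\Pi_s]$, spelling out how the maximality clause of $\Omega$ is used.
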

\begin{proof}
Let us consider  set $X=\{(\beta,u,v,\gamma)|\beta\in\Pi,(u,v,\gamma)\in \Omega\left(\beta\right)\}$. The cardinality of this set exactly equals the expression to be bounded.
Now given $(u,v)\in P$ and $\gamma\in\Pi$ we will show that if $(\beta,u,v,\gamma)\in X$ then $\beta = lsp(\gamma v)$; this will imply the  lemma.

Using definition of $\Omega\left(\beta\right)$ we reformulate $X$ as $\{(\beta,u,v,\gamma)|\beta\in\Pi,(u,v)\in P, x_{\beta}=\ast v; \gamma\in \Pi, \gamma = \ast\beta_{1:|\beta|-|v|}, \forall (\beta,\delta)\in E[\Pi_{j_{\beta}}]\,\,\gamma \ne \ast\delta_{1:|\delta|-|v|}\}$. Then with fixed $(u,v)\in P$ and $\gamma\in\Pi$ the definition requires that $\beta$ satisfies:
$\beta\in\Pi, x_{\beta}=\ast v, \gamma v = \ast\beta$ and $\forall (\beta,\delta)\in E[\Pi_{j_{\beta}}]\,\,\gamma v \ne \ast\delta$. It can be checked that this is equivalent to $ \beta = lsp(\gamma v)$.
\end{proof}

From this lemma we conclude that complexity of step 3 is $O\left(d(nL)^2|P|\right)$.
\fi

Step 4 is the most expensive step of the algorithm. To estimate its complexity,
we need to specify how we compute shortest paths.
\begin{theorem}
\label{additions}
%For $d>1$,
$M=\min_x F(x)$ can be computed in time $O((nL)^2(d|P|+d\log nL+|D|) + nL\sum_{i=1}^{d}|E_i|)$, where
$E_i$ is the set of pairs $(\alpha,\beta):\beta\succ \alpha$ for which message $M_i(\alpha,\beta)$ was correctly computed in step 3 of Algorithm~\ref{alg:Int}.
\end{theorem}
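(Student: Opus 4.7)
The strategy is to account separately for the cost of each step of Algorithm~\ref{alg:Int} and show that the dominant contributions match the three ``$(nL)^2$'' terms and the ``$nL\sum_i|E_i|$'' term in the stated bound. We assume the quantities $\phi(\alpha),f(\alpha),f(\alpha\beta)$ of Lemma~\ref{lemma:varphis} have been precomputed; this costs $O(L|\Pi|)=O(nL\cdot nL)$ additions, which is already subsumed.

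First, I would treat the base level (step 1). A message $M_0(\alpha,\beta)$ involves no grammatical part, so it reduces to a pure \PB minimization on the interval $[i_\alpha,n]$ with boundary constraint $x_{i_\alpha:j_\alpha}=w_\alpha$ and a forbidden-suffix set at position $j_\beta$. For a single left boundary $\alpha$, the technique of~\cite{TK:ICML} computes all $M_0(\alpha,\cdot)$ in $O(nL|D|)$ time by modifying the pattern-based potential with a large negative ``bonus'' at $\alpha$ and reusing their one-sided message recursion. Iterating over all $\alpha\in\Pi$ (there are $O(nL)$ of them) yields $O((nL)^2|D|)$, giving the $|D|\cdot(nL)^2$ term.

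Next, I would bound step 3 over all $d$ iterations. For a fixed $\beta$, the key combinatorial lemma is that $|\Omega(\beta)|$, summed over $\beta\in\Pi$, is at most $|P|\cdot nL$: the map $(u,v,\gamma)\mapsto lsp(\gamma v)$ identifies $\beta$ uniquely given $(u,v,\gamma)$, so the set $X=\{(\beta,u,v,\gamma):(u,v,\gamma)\in\Omega(\beta)\}$ injects into $P\times\Pi$. Combined with the outer loop over $\alpha\in\Pi$ (another $nL$ factor) and over $k$ (a factor of $d$), this gives total cost $O(d(nL)^2|P|)$; this is the $d|P|\cdot(nL)^2$ term. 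All auxiliary quantities $f(\alpha u), f(lsp(\alpha u)), f(\gamma v)$ are available in $O(1)$ from Lemma~\ref{lemma:varphis}.

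The technical crux is step 4, the all-pairs shortest path on $G=(\Pi,\succ)$. I would first observe that $\succ$ is a strict partial order (irreflexive and transitive from its definition~\eqref{eq:order}), so $G$ is a DAG; moreover a topological order is produced for free by sorting $\Pi$ by $j_\alpha$, which can be done once in $O(nL)$ after bucket-placing patterns by their end positions. For each source $\alpha\in\Pi$, I would run Dijkstra with a Fibonacci heap on the edges with finite weight (i.e.\ those in $E_k$ after step 3), giving $O(|\Pi|\log|\Pi| + |E_k|) = O(nL\log nL + |E_k|)$ per source; observe that the edge weights $c(\alpha,\beta)=\widetilde M_k(\alpha,\beta)-f(\beta)$ can be assumed nonnegative after adding a uniform offset, or alternatively one may exploit the topological order to relax edges in DAG-SP fashion (the stated bound follows in either case). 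Summed over $nL$ sources and over $d$ iterations, the shortest-path computation costs
\[
O\!\left(d\cdot nL \cdot (nL\log nL + |E_k|)\right) = O\!\left(d(nL)^2\log nL + nL\sum_{k=1}^{d}|E_k|\right),
\]
producing the remaining two terms. Step 5 is an $O((nL)^2)$ pass and step 7 is $O(nL)$, both absorbed.

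The main obstacle I anticipate is confirming that Dijkstra (or DAG-SP) on the graph induced by step 3 has exactly $|E_k|$ relevant edges per iteration, i.e.\ that vertices $\beta$ with $\widetilde M_k(\alpha,\beta)=+\infty$ may be ignored without affecting the output of step 4, so that the $|E_k|$-dependence is tight. This follows because an infinite-weight edge cannot participate in any finite shortest path, and the final answer in step 7 is taken over $\alpha\in\Pi_n$ which, by correctness (Theorem~\ref{th:correctnessInt}), receives its optimal value from finite paths. Putting the three contributions together yields the claimed bound.
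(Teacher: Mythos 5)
Your accounting of steps 1 and 3 matches the paper: the $O((nL)^2|D|)$ bound by reducing each $M_0(\alpha,\cdot)$ to a one-sided \PB message computation, and the $O(d(nL)^2|P|)$ bound for step 3 via the injection $(u,v,\gamma)\mapsto\beta=lsp(\gamma v)$ showing $\sum_{\beta}|\Omega(\beta)|\le|P|\,nL$, are exactly the paper's arguments. The gap is in step 4, where you have silently reinterpreted $E_i$. In the theorem, $E_i$ is the set of pairs whose message is \emph{correctly} computed already in step 3, i.e.\ the edges $(\alpha,\beta)$ for which the single edge is itself a shortest $\alpha$--$\beta$ path in $G$ (the ``essential'' edge set $E^*$). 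Your proof replaces this with the set of \emph{finite-weight} edges and runs Dijkstra or DAG relaxation from each source, which costs $O(nL\log nL+|E_k^{\mathrm{fin}}|)$ per source with $E_k^{\mathrm{fin}}$ the set of all finite edges. Since $|E_k^{\mathrm{fin}}|$ can be $\Theta((nL)^2)$ while $|E_k|=O(nL)$, your argument only yields $\Theta(d(nL)^3)$ in precisely the regime where the theorem promises the better bound; the entire point of the $nL\sum_i|E_i|$ term is output sensitivity in the essential edges, which per-source Dijkstra does not provide.

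The paper obtains this by invoking the hidden-path all-pairs shortest-path algorithm of Karger--Koller--Phillips~\cite{Karger+al:93} (or Demetrescu--Italiano~\cite{Demetrescu}), whose complexity $O(|E^*||V|+|V|^2\log|V|)$ is stated in terms of $E^*$, after making edge weights nonnegative with Johnson-type potentials $r(\alpha)$ computed by one single-source pass in the DAG from an auxiliary source (and after an $\varepsilon$-perturbation to identify $E^*$ with $E_i$ under ties). Note also that your suggestion of ``adding a uniform offset'' to enforce nonnegativity does not preserve shortest paths, since paths have different numbers of edges; the topological-order relaxation you mention does handle negative weights, but again gives a bound in $|E_k^{\mathrm{fin}}|$, not $|E_k|$. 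To repair the proof you must replace per-source Dijkstra with an output-sensitive APSP routine of this kind.
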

\ifTR
\begin{proof}
First let us show that
there is a simple scaling procedure~\cite{Goldberg93} that makes all edge weights nonnegative. Let us add a new vertex $s$ (source) to $G=\left(\Pi,\succ\right)$ and connect this source with all vertices that have no incoming edges. We assign zero weights to new added edges. Since our graph  is acyclic, then single source shortest path algorithm will take only $O\left((nL)^2\right)$ time~\cite{Cormen}. This way we can compute values $r(\alpha),\alpha\in\Pi$ that are equal to the distance from $s$ to $\alpha$ in the new graph. Now it can be checked that
\begin{equation}
r(\alpha)-r(\beta)+c(\alpha,\beta)\geq 0, \beta\succ\alpha
\end{equation}
Therefore, we can define new weights by the following rescaling formula: $c'(\alpha,\beta) = r(\alpha)-r(\beta)+c(\alpha,\beta)$ and for every path from $\alpha$ to $\beta$ in the graph its new length $l'$ will depend on the old one $l$ by formula: $l' = r(\alpha)-r(\beta)+l$. Therefore, rescaled graph will have the same shortest paths as the initial one.

Now for this rescaled version we can apply standard algorithms for all-pairs shortest path problem with nonnegative edge weights.
We suggest an algorithm of Karger-Koller-Phillips~\cite{Karger+al:93} (or, of~Demetrescu-Italiano\cite{Demetrescu}). This algorithm has complexity $O\left(|E^*||V|+|V|^2\log |V|\right)$, where $V$ is a set of vertices, $E$ is a set of edges, and $E^*$ is the set of edges $(u,v)\in E$ for which one of optimal paths from $u$ to $v$ is equal to $\left\{(u,v)\right\}$.
Under a non-critical assumption we have $E^*=E_i$ for the graph $(V,E)$ in step 4 of the $i$-th iteration of the algorithm. \footnote{To achieve this equality, we can
subtract a sufficiently small value $\varepsilon > 0$ from initial edge weights; shortest paths of the graph with new weights will also be shortest for the graph with old weights.}
%
%It can be seen that for the oriented graph that we process at step 4 of $i$-th loop iteration, if we subtract sufficiently small value $\varepsilon > 0$ from edge weights, then shortest paths of the graph with new weights will also be shortest for the graph with old weights. Yet a set $E^*$ is exactly equals to $E_i$ that we defined earlier.
Therefore, the total complexity of Algorithm~\ref{alg:Int} will be $O((nL)^2(d|P|+d\log nL+|D|)+ nL\sum_{i=1}^{d}|E_i|)$.
\end{proof}

\fi
The complexity is usually dominated by the term $nL\sum_{i=1}^{d}|E_i|$:
in the worst case we have $|E_i| = O((nL)^2)$, and thus
the overall worst-case complexity is cubic with respect to $nL$ (as
of the algorithm for general context-free grammars).
In practice, however, $|E_i|$ can be smaller (reaching $|E_i|=O(nL)$ in
the best case),
and so the empirical runtime of the algorithm can be better than $O((nL)^3)$.
Experiments on a synthetic data given 
\ifTR
in  section~\ref{sec:experiments}
\else
in the supplementary material
\fi
confirm this.

\subsection{Linear time algorithm for $d=1$}
In this section we focus on the case $d=1$.
First, we show that just a simplification of Algorithm~\ref{alg:Int} gives the following.
\begin{theorem} For $d=1$, $M=\min_x F(x)$ can be computed in time $O((nL)^2(|P|+|D|))$.
\end{theorem}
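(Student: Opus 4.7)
The plan is to observe that when $d=1$ the output $M$ only depends on the values $M_0(\varepsilon_0,\alpha)$ and $M_1(\varepsilon_0,\alpha)$ for $\alpha\in\Pi_n$, so the costly \emph{all-pairs} shortest path computation in step 4 of Algorithm~\ref{alg:Int} can be replaced by a single \emph{single-source} shortest path computation from $\varepsilon_0$.

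First I would run step 1 of Algorithm~\ref{alg:Int} to obtain all $M_0(\alpha,\beta)$ in $O((nL)^2|D|)$ time, then execute step 3 with $k=1$ to obtain the pre-shortest-path values of $M_1(\alpha,\beta)$ in $O((nL)^2|P|)$ time; both bounds are exactly the ones already established for Algorithm~\ref{alg:Int}.

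Next I would form the weighted directed graph $G=(\Pi,\succ)$ with edge weights $c(\alpha,\beta)=M_1(\alpha,\beta)-f(\beta)$. The key structural observation is that $\beta\succ\alpha$ forces $j_{\beta}>j_{\alpha}$, so $G$ is a DAG, and therefore single-source shortest paths from $\varepsilon_0$ to every $\alpha\in\Pi$ can be computed in time $O(|V|+|E|)=O((nL)^2)$ by relaxing edges in topological (i.e., increasing-$j$) order. This is where the saving over Theorem~\ref{additions} comes from: on a DAG one does not need non-negative weights and can bypass the all-pairs algorithm that drove the cubic term $nL\sum_i|E_i|$ in the general bound. Using~\eqref{eq:SP} we set $M_1(\varepsilon_0,\alpha):=SP(\varepsilon_0,\alpha)+f(\alpha)$, which matches the value Algorithm~\ref{alg:Int} would have produced for these entries.

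Finally, return $M=\min\left(\min_{\alpha\in\Pi_n}M_0(\varepsilon_0,\alpha),\;\min_{\alpha\in\Pi_n}M_1(\varepsilon_0,\alpha)\right)$; the three phases sum to $O((nL)^2(|P|+|D|))$. The one thing to check carefully—and the only real obstacle—is that the full table $\{M_1(\alpha,\beta)\}_{\alpha\ne\varepsilon_0}$ is never needed downstream: since the outer loop of Algorithm~\ref{alg:Int} terminates after $k=1$, those entries are consumed only by step 7, which reads exclusively rows with $\alpha=\varepsilon_0$, so restricting the shortest-path step to a single source preserves the correctness already proved in Theorem~\ref{th:correctnessInt}.
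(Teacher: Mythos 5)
Your proposal is correct and follows essentially the same route as the paper's own proof: both exploit the fact that for $d=1$ step 7 only reads messages of the form $M_1(\varepsilon_0,\alpha)$, so the all-pairs computation in step 4 can be replaced by a single-source shortest-path computation from $\varepsilon_0$ on the acyclic graph $G=(\Pi,\succ)$, which runs in $O(|V|+|E|)=O((nL)^2)$ time and is dominated by the $O((nL)^2|D|)$ and $O((nL)^2|P|)$ costs of steps 1 and 3. Your explicit check that the entries $M_1(\alpha,\beta)$ with $\alpha\ne\varepsilon_0$ are never consumed downstream is exactly the point the paper relies on, just stated more carefully.
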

%\vspace{-7pt}
\ifTR
\begin{proof}
The specificity of this case is determined by the fact that we compute shortest paths only for one graph and in the end of the algorithm in step 7 we  need only shortest paths that starts from vertex $\varepsilon_{0}$. Therefore, instead of using all-pairs shortest path algorithm we can use only single source version of it.
Since there is $O\left(|V|+|E|\right)$ single source shortest path algorithm for general acyclic graphs~\cite{Cormen}, overall complexity of Algorithm~\ref{alg:Int} is $O\left((nL)^2(|P|+|D|)\right)$.
\end{proof}
\else
\noindent This theorem holds since we can use {\bf single-source} instead of {\bf all-pairs} shortest path computations
(see supplementary material).
\fi
%\vspace{-7pt}

In the previous algorithm we did not change data structures, though made computations with them more efficient. This way we cannot radically improve complexity, since only reading the input would take time quadratic with respect to $nL$.

We now describe an alternative approach which scales linearly with  $nL$.
The idea is to compute another set of messages
$S\left(g,\beta\right)$, where $\beta\in\Pi$ and $g$ is a {\em  state}
described by a rule $r=S^1\rightarrow u S^0 v$ together with index pointing to a particular position
inside this rule. More precisely, $g$ can be one of the following:
(a) $g=u_1.u_2 S^0 v$ with $u=u_1u_2$, (b) $g=uS^0 v_1 .v_2$ with $v=v_1v_2$, or (c) $g=u\dot S^0 v$.
The dot indicates the position inside $r$, and corresponds to the end of pattern $\beta$. The state $g=u\dot S^0 v$ designates that the position is strictly inside the word between $u$ and $v$.
Note that $u,v$ are words, not patterns (i.e.\ they are not associated with any interval).

Message $S\left(g,\beta\right)$ is defined as
 the minimum of the functional
that includes costs of both patterns and rules
over all partial assignments  $x=x_{1:j_{\beta}}$ under two constraints:
(i) $x=\ast\beta$ and $x\ne\ast\gamma, (\beta,\gamma)\in\Pi_{j_{\beta}}$;
(ii) $x$ can be extended to some assignment $xy$
that can be derived using rule $r$ together with
all other rules counted for $x$, and the dot in $x.y$
would correspond to the dot in $g$ (see
Fig.~\ref{fig:brackets2}).
The cost of rule $r$ is counted only if $g=uS^0v.$ \footnote{Note that similar states $g$ are also used in Earley parser~\cite{Earley}.
Thus, our algorithm can be viewed as an extension of Earley parser to \GPBs.
Unfortunately, for general grammars such extension would give algorithms whose complexity is non-polynomial in $|R|$.}.

\begin{figure}[t]
\vskip 0.05in
\small
\begin{center}
\includegraphics[scale=0.4]{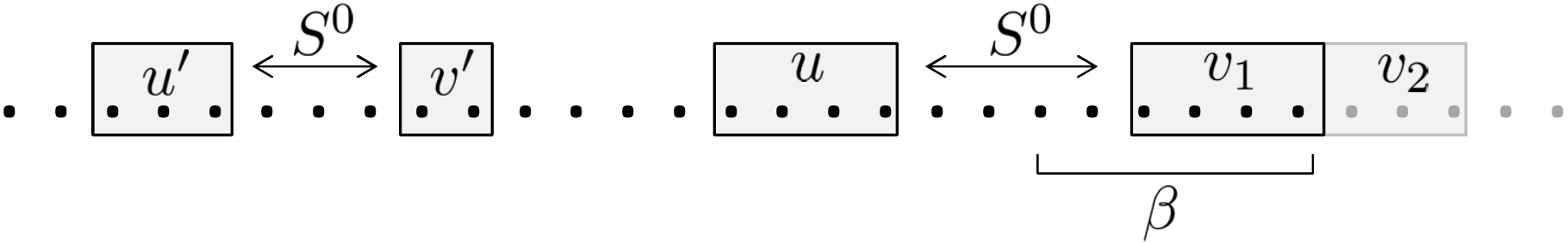} \vspace{-11pt} \\
\end{center}
\caption{A state $g=uS^0v_1.v_2, v_2\ne \varepsilon$. At a current stop at $j_{\beta}$ we have already read a sequence $u\dots v_1$ and expect $v_2$ to come. All other interacting pairs (such as pair $(u',v')$ shown in the figure)
can be located only before $u$.
}
\label{fig:brackets2}
\vskip -0.1in
\end{figure}

\begin{theorem}
\label{F1}
All messages $S\left(g,\beta\right)$, and therefore $M=\min_x F(x)$, can be computed in time $O(|P|n L(l_{\min}\min(|D|,\log l_{\min})+|P|))$.
\end{theorem}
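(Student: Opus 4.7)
The plan is to compute all messages $S(g,\beta)$ by a single left-to-right sweep over positions $s=0,1,\ldots,n$, in the spirit of Earley parsing augmented with the pattern-tracking variable $\beta$. At each position $s$ and each $\beta\in\Pi_s$ I maintain $S(g,\beta)$ for every active state $g$, updating from the messages at position $s-1$. In addition to states of types (a), (b), (c), I introduce a ``free'' state $g_{\mathrm{free}}$ representing the situation where a top-level concatenation of fully-derived $S^1$'s has just finished at position $s$; the desired minimum $M$ is then $\min_{\beta\in\Pi_n} S(g_{\mathrm{free}},\beta)$.

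I would classify the state transitions into four kinds. Transitions of kind~(1) are \emph{deterministic dot-advances} inside the $u$-portion of a type (a) state or the $v$-portion of a type (b) state: here $x_s$ is forced by the state and the update of $\beta$ is a deterministic function of its predecessor and $x_s$. Transitions of kind~(2) are \emph{$S^0$-interior} updates at the state $g = u\dot S^0 v$, where $x_s$ ranges over $D$ and the recursion reduces to the pure pattern-based CRF computation solved in~\cite{TK:ICML}. Transitions of kind~(3) are \emph{rule-closing} moves from $g = uS^0 v.$ into $g_{\mathrm{free}}$, which add the weight $\theta^1_{u,v}$ together with the boundary $f$-corrections in the same inclusion/exclusion style as~\eqref{eq:Message11} of Algorithm~\ref{alg:Int}. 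Transitions of kind~(4) are \emph{rule-opening} moves from $g_{\mathrm{free}}$ into any $g' = .u'S^0 v'$ with $(u',v')\in P$, or into a single-terminal derivation. Correctness is established by induction on $s$: the inclusion--exclusion bookkeeping for $f$ is identical to that used to derive~\eqref{eq:Message11}, and the side-condition ``$x\neq\ast\gamma$ for every $(\beta,\gamma)\in E[\Pi_{j_\beta}]$'' in the definition of $S(g,\beta)$ is preserved through the $lsp(\cdot)$ substitutions, exactly as in Algorithm~\ref{alg:Int}.

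For the complexity bound, the first term $|P|\,nL\,l_{\min}\min(|D|,\log l_{\min})$ absorbs transitions of kinds~(1) and (2): for each of the $|P|$ rules, propagating messages through the $u$- and $v$-portions and across the $S^0$-interior can be batched per position so that the per-$\beta$ work is $\min(|D|,\log l_{\min})$, by exploiting the forest structure of $G[\Pi_s]$ together with the precomputed quantities of Lemma~\ref{lemma:varphis}, exactly as in the \PB algorithms of~\cite{Ye:NIPS09,TK:ICML}. The second term $|P|^2 nL$ absorbs kinds~(3) and (4): at each of the $O(nL)$ pairs $(s,\beta)$, we may simultaneously close any of $|P|$ open rules and open any of $|P|$ new ones, giving $|P|^2$ combinations to consider when updating the messages incident to $g_{\mathrm{free}}$.

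The main obstacle I anticipate is the amortization in the first complexity term. Naively the number of kind-(1) transitions is proportional to $L_P = \sum_{(u,v)\in P}(|u|+|v|)$, which in general exceeds $|P|\cdot l_{\min}$. The remedy is to share work across consecutive dot positions within a single rule: since a deterministic character-advance does not alter $\beta$ except at pattern boundaries, the chain of messages through one rule can be processed in time proportional to the number of \emph{pattern-relevant events} encountered in it, which is bounded by $O(l_{\min})$ per rule on average, with the remaining purely mechanical updates absorbed into the per-$\beta$ cost already charged for kind-(2) transitions. Verifying this amortization rigorously, together with the $lsp$ bookkeeping that guarantees the side-condition in the definition of $S(g,\beta)$, constitutes the bulk of the technical work.
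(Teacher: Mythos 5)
Your proposal follows essentially the same route as the paper's proof: a left-to-right sweep computing $S(g,\beta)$ in order of increasing $j_\beta$, with the same dotted states, the same predecessor structure (your transition kinds (1)--(4) correspond exactly to the paper's predecessor sets $P_g$, and your $g_{\mathrm{free}}$ is simply folded into the states $.\,uS^0v$ there), and the same accounting yielding $O\bigl(|P|\,nL\,l_{\min}\min(|D|,\log l_{\min})\bigr)+O\bigl(|P|^2 nL\bigr)$. The amortization issue you flag --- that the raw number of deterministic dot-advance transitions scales with $\sum_{(u,v)\in P}(|u|+|v|)$ rather than $|P|\,l_{\min}$ --- is a fair concern, but the paper's own proof does not resolve it either: it simply charges an average of $O(\min(|D|,\log l_{\min}))$ per message, as in the pattern-based CRF algorithm, and asserts the $|P|\,l_{\min}\,nL$ message count.
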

\ifTR
\begin{proof}
We compute these messages in the order of increasing $j_{\beta}$. Let us define for a given state $g$ a set $P_{g}$ of possible states one step back (here and below $a\in D$):
$P_{u_1 a . u_2 S^0 v} = \left\{u_1 . a u_2 S^0 v\right\}$; $P_{u S^0 v_1 a . v_2} = \left\{u S^0 v_1 . a v_2\right\}$, if $v_2\ne \varepsilon$; $P_{uaS^0.v} = \left\{ua \dot S^0 v, ua.S^0 v, u.aS^0 v\right\}$; $P_{u \dot S^0 v} = \left\{u \dot S^0 v, u.S^0 v\right\}$.

If $g\ne .u S^0 v$, it is easy to see that $S\left(g,\beta\right)$ can be calculated as a function of $S\left(g',\gamma\right),g'\in P_{g}, \gamma\in\Pi_{j_{\beta}-1}$ in the same way as it is done in algorithm for minimizing \PBs \cite{TK:ICML}. Recall, that the average price of such computation is $O\left(\min(|D|,\log l_{\min})\right)$ where $l_{\min} = \min_{w\in\Lambda}|w|$ which leads to overall complexity of such computation $O(|P| l_{\min} n L\cdot \min(|D|,\log l_{\min}) )$. The only specifics is a case of $g=u S^0 va.$ for which we also have to add a weight of rule $S^1\rightarrow u S^0 v$ to an expression when we calculate $S\left(g,\beta\right)$ based on the assumption that the previous state was $u S^0 v.a$.

For $g= .u S^0 v$, first we compute $S\left(g,\beta\right)$ as a function of $S\left(g,\gamma\right),\gamma\in\Pi_{j_{\beta}-1}$, plus we should take into consideration that $S\left(g,\beta\right)\leftarrow \min S\left(g,\beta\right)$, $\min_{(u',v')\in P} S\left(u'S^0 v'.,\beta\right)$.

After adding complexities for all cases we obtain overall complexity of $O (|P|\cdot n L (l_{\min} \min(|D|,\log l_{\min})+|P|))$.
\end{proof}
\else
A proof can be found in supplementary material.
\fi

\subsection{Generalizations for rules weights}
So far we assumed for simplicity that the weights $\nu(r)$ of rules $r\in R$ do not depend on the interval $[i,j]$
for which this rule is applied. In practice, however, such dependence is desirable, since e.g.\ the input substring $z_{i:j}$
may vary for different intervals. We can incorporate this dependence by introducing weight $\nu(r,i,j)$ of a rule
$r\in R$ given that we derive subword $x_{i:j}$ from its left-side nonterminal.
It is straightforward to modify algorithms to this case (without affecting the complexity):
when computing $M_A(\alpha,\beta)$, we simply need to use $\nu(r,j_\alpha+1,j_\beta)$ instead of $\nu(r)$.
The only exception is the algorithm in Theorem~\ref{F1}, since it works with different messages.
In this case we can show the following.
\begin{theorem} Suppose that weights of rules $r\in R$ with intervals $[i,j]$ satisfy the property
$\nu(r,i,j) = f(r,i)+g(r,j)$. Then $M=\min_x F(x)$ can be computed in time $O(|P|n L(l_{\min}\min(|D|,\log l_{\min})+|P|))$.
\end{theorem}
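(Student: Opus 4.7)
My plan is to adapt the algorithm of Theorem~\ref{F1} by splitting the rule cost $\nu(r,i,j) = f(r,i)+g(r,j)$ into two additive contributions that are charged at different transitions of the dotted-state dynamic programming. The key observation is that, although the algorithm processes the input left-to-right and can never simultaneously see both endpoints $i$ and $j$ of a rule instance, the additive decomposition lets us pay $f(r,i)$ at the moment the left endpoint $i$ becomes the current position, and $g(r,j)$ at the moment the right endpoint $j$ becomes the current position.

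Concretely, consider a rule $r = S^1\to u S^0 v$ with $u=a_1\ldots a_k$ non-empty (the cases $u=\varepsilon$ or $v=\varepsilon$ are handled analogously, charging $f(r,\cdot)$ or $g(r,\cdot)$ at the appropriate $\varepsilon$-transition). The first character $a_1$ of $u$ is read at exactly the transition from state $.uS^0v$ at position $j_\beta-1$ to state $a_1.a_2\ldots a_k S^0 v$ at position $j_\beta$; at that moment the rule instance starts at $i=j_\beta$, so I add $f(r,j_\beta)$ to the update equation for $S(a_1.a_2\ldots a_k S^0v,\beta)$. Symmetrically, the rule instance is completed when the state becomes $uS^0v.$ at position $j_\beta$, meaning the last character of $v$ is at $j=j_\beta$; there I add $g(r,j_\beta)$, exactly replacing the constant $\nu(r)$ that was charged at this transition in the proof of Theorem~\ref{F1}.

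Since each of the two additions is a single lookup into the precomputed tables for $f(r,\cdot)$ and $g(r,\cdot)$, both transitions still cost $O(1)$ on top of the work already analyzed in Theorem~\ref{F1}. Consequently the asymptotic complexity is unchanged at $O(|P|nL(l_{\min}\min(|D|,\log l_{\min})+|P|))$, and the correctness argument of Theorem~\ref{F1} carries over verbatim once one verifies that every parse of the input that uses rule $r$ on the subword $x_{i:j}$ is charged the combined weight $f(r,i)+g(r,j)=\nu(r,i,j)$ exactly once: this holds because in any valid dotted-state trajectory the pair of transitions ``start of $u$'' and ``end of $v$'' occurs exactly once per rule instance.

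The main conceptual obstacle is precisely the reason the hypothesis of additive separability is needed: a general interval-dependent weight $\nu(r,i,j)$ would force the state to remember the starting index $i$ throughout the entire derivation of the rule, inflating the message space by a factor proportional to $n$ and destroying the linear dependence on $nL$. The additive decomposition is what lets all dependence on $i$ be ``paid up front'' and all dependence on $j$ be ``paid at the end,'' so no auxiliary information needs to be carried in the state. Verifying this bookkeeping carefully, especially the edge cases $u=\varepsilon$ and $v=\varepsilon$ where start and end transitions may coincide with other transitions of the Earley-style automaton, is the only nontrivial detail of the argument.
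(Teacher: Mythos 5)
Your proposal is correct and essentially matches the paper's own (very brief) argument: the paper likewise exploits the additive split by folding the $f(r,i^*)$ term into the messages for states in which the prefix $u$ has been consumed (where $i^*$ is still recoverable from the current position) and leaving $g(r,j)$ to be charged at completion, with no change to the complexity. The only cosmetic difference is that you charge $f(r,i)$ at the transition reading the first character of $u$ rather than upon finishing $u$; both choices are valid for the same reason.
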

\ifTR
\begin{proof} To prove this statement we only have to redefine messages $S\left(g,\beta\right)$ in a way that if $g=uS^0 v_1 .v_2, v=v_1 v_2$ or $g=u\dot S^0 v$ or $g=u.S^0 v$, then the weight $f(S^1\rightarrow u S^0 v,i^*)$ should be present in $S\left(g,\beta\right)$, where $i^*$ is an index from which $u$ started. The calculation of messages is only slightly different from the previous and can be easily reconstructed.
\end{proof}
\else
A proof can be found in supplementary material.
\fi

%%%%%%%%%%%%%%%%%%%%%%%%%%%%%%%%%%%%%%%%%%%%%%%%%%%%%%%%%%%%%%%%%%%%%%%%%%%%%%%%%%%%%%%%%%
%%%%%%%%%%%%%%%%%%%%%%%%%%%%%%%%%%%%%%%%%%%%%%%%%%%%%%%%%%%%%%%%%%%%%%%%%%%%%%%%%%%%%%%%%%
%%%%%%%%%%%%%%%%%%%%%%%%%%%%%%%%%%%%%%%%%%%%%%%%%%%%%%%%%%%%%%%%%%%%%%%%%%%%%%%%%%%%%%%%%%
%%%%%%%%%%%%%%%%%%%%%%%%%%%%%%%%%%%%%%%%%%%%%%%%%%%%%%%%%%%%%%%%%%%%%%%%%%%%%%%%%%%%%%%%%%
%%%%%%%%%%%%%%%%%%%%%%%%%%%%%%%%%%%%%%%%%%%%%%%%%%%%%%%%%%%%%%%%%%%%%%%%%%%%%%%%%%%%%%%%%%
%%%%%%%%%%%%%%%%%%%%%%%%%%%%%%%%%%%%%%%%%%%%%%%%%%%%%%%%%%%%%%%%%%%%%%%%%%%%%%%%%%%%%%%%%%
%%%%%%%%%%%%%%%%%%%%%%%%%%%%%%%%%%%%%%%%%%%%%%%%%%%%%%%%%%%%%%%%%%%%%%%%%%%%%%%%%%%%%%%%%%
%%%%%%%%%%%%%%%%%%%%%%%%%%%%%%%%%%%%%%%%%%%%%%%%%%%%%%%%%%%%%%%%%%%%%%%%%%%%%%%%%%%%%%%%%%
%%%%%%%%%%%%%%%%%%%%%%%%%%%%%%%%%%%%%%%%%%%%%%%%%%%%%%%%%%%%%%%%%%%%%%%%%%%%%%%%%%%%%%%%%%
%%%%%%%%%%%%%%%%%%%%%%%%%%%%%%%%%%%%%%%%%%%%%%%%%%%%%%%%%%%%%%%%%%%%%%%%%%%%%%%%%%%%%%%%%%
%\section{Comments on learning and extension of the model}
\section{Learning \GPB model}
We introduced a new family of probabilistic distributions that we call a grammatical pattern-based model.
This distribution is defined on a pair of objects, i.e. $p\left(x,\lambda\right) \sim \exp\{-E_\theta(x,\lambda|z)\}$, where $x$ stands for
a labeling sequence and $\lambda$ stands for a derivation of $x$ according to some grammar $\Gamma$. Suppose that $\lambda$ is a hidden variable and we need to learn the model.
We showed that minimizing the energy over both $x$ and $\lambda$ is a tractable problem. Moreover, minimizing the energy over $\lambda$ for a fixed $x$ is equivalent
to a least-weight parsing of $x$ according to grammar $\Gamma$, which can be solved in $O\left(|R|n^3\right)$ time by the standard \CYK algorithm\cite{Aho}.
Together, these two facts open a possibility to learn  \GPB models (under condition that we parameterize energy linearly with respect
to weights to be learned) by the struct-SVM with hidden variables approach \cite{Yu:2009}.

\ifTR
Learning the model with maximum likelihood approach by either gradient-based or EM-based methods \cite{nowozin-fnt2011} requires another kind of algorithms.
First of all we need algorithms for computing expressions like $\sum_{x,\lambda}\exp\{-E_\theta(x,\lambda|z)\}$ and $\sum_{\lambda}\exp\{-E_\theta(x,\lambda|z)\}$.
It can be seen that our algorithm for general \WCFG can be turned into an algorithm for computing the first sum.
Indeed, if in the definition of messages and $f,\phi$-expressions we replace ``$\min$'' with ``$\sum$'' and ``$+$'' with multiplication
(and accordingly, ``$-$'' in the algorithm is turned into division and new weights of patterns and rules are defined as $e^{-\text{old weight}}$) we obtain a valid algorithm for computing such sums. Moreover, now we can compute the matrix product in step 3 of the algorithm using some fast matrix multiplication algorithm \cite{Coppersmith:1987}, which leads to complexity $O\left(|R|(nL)^{2.376}\right)$.
Also,
 sums of the second type can be computed in time $O\left(|R|n^{2.376}\right)$.

Such sums allow computing in polynomial time marginals
required by gradient-based or EM-based methods of learning,
e.g.\ by running the algorithm independently for each marginal with appropriately modified costs.
We conjecture, however, that the marginals can be computed more efficiently by a single
computation, similar to~\cite{Ye:NIPS09,TK:ICML}. This is left as a future work.

\else
Learning the model with maximum likelihood approach requires another kind of algorithms.
In particular, we need to be able to compute sums of the form $\sum_{x,\lambda}\exp\{-E_\theta(x,\lambda|z)\}$ and $\sum_{\lambda}\exp\{-E_\theta(x,\lambda|z)\}$.
It is not difficult to show that such sums can be computed in polynomial time (a sketch is given in the supplementary material).
\fi

\ifTR
\section{Experiments and discussion}\label{sec:experiments}
To support the claim the the runtime of the algorithm for interaction grammars can be better than
$O(n^3)$ (for a fixed set of patterns and interacting pairs), we present some computational results on a synthetic data.
As a subroutine for solving all-pairs shortest path problem in step 4 of Algorithm~\ref{alg:Int}
we used a code based on \cite{Demetrescu:Exp04} that we took from \url{http://www.dis.uniroma1.it/~demetres/experim/dsp/}\footnote{Note that for Demetrescu-Italiano algorithm the complexity bound of $O\left(|E^*||V|+|V|^2\log |V|\right)$ is true only if all shortest paths are unique. This condition we can satisfy by additional subtracting of some random value from interval $[0,\varepsilon']$ for sufficiently small $\varepsilon'$ from each edge weight.}.

In \GPB potential we defined $D = \left\{0,1\right\}$ and $\Lambda = D^4$, i.e. $\left|\Lambda\right| = 16$. An interaction grammar $\Gamma$ of depth 2 contained only one ``interaction'' rule $r = S\rightarrow 11 S 11$. For each pattern $\alpha = [i_{\alpha},j_{\alpha},c_{\alpha}]$ its weight $c_{\alpha}$ was taken as a uniformly distributed random value from interval $\left[0,1\right]$. A weight $\nu\left(r,i,j\right)$ was taken as a uniformly distributed random value from interval $\left[0,C\right]$, where $C$ is a parameter that we varied from 0.0 to 10 with step 0.1. We introduced a parameter $C$ to consider cases when pattern-based part is dominant in \GPB potential ($C=0.0$) and visa versa ($C=10.0$).
The length of variable chain was varied from 10 to 350 with step 10.

\begin{figure}[t]
\vskip 0.05in
\small
\begin{center}
\includegraphics[scale=0.5]{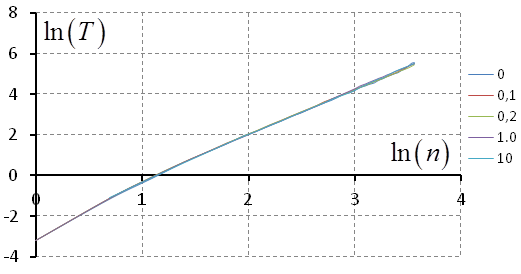}\\
\end{center}
\caption{The dependence of minimization time $T$ on chain length $n$ for different values of $C$.}
\label{fig:plot}
%\vskip -0.3in
\end{figure}

The dependence of minimization time $T$ on chain length $n$ for different values of $C$ is shown in Fig.~\ref{fig:plot}.
For all values of $C$, in the interval  $[100,350]$ minimization time starts to depend on $n$ as $n^{x}$ with a power $x$ between 2.2 and 2.3.
Our experiments also showed that the dependence does not change whether we make subtractions of small values from edge weights that we described. A very similar dependence of $T$ on $n$ (as $n^{2.3}$) was obtained for other choices of ``interaction'' rule $r$, like e.g., $S\rightarrow 0 S 1$ or $S\rightarrow 1 S 1$.
\else
\fi

\section{Conclusions}
The \GPB model can be viewed as a natural combination of a local \PB and a more global \WCFG models:
combining different constraints by taking a sum of energy terms is a standard approach in the CRF literature.
We showed that various inference tasks in \GPBs can be solved in polynomial time.

The complexity of our general-purpose algorithm is rather high, and it can be prohibitively slow for some applications.
However, we showed there exist faster techniques in some special case, namely interaction grammars of a fixed depth.
This suggests that there may be other  classes of \GPBs with better complexity
(such as $LR(k)$ grammars).
We hope that our paper
will stimulate the search for such classes.

\bibliographystyle{plain}

\begin{thebibliography}{10}

\bibitem{Aho}
Alfred~V. Aho and John~E. Hopcroft.
\newblock {\em The Design and Analysis of Computer Algorithms}.
\newblock Addison-Wesley Longman Publishing Co., Inc., Boston, MA, USA, 1st
  edition, 1974.

\bibitem{Bar-Hillel}
Yehoshua Bar-Hillel, M.~Perles, and E.~Shamir.
\newblock On formal properties of simple phrase structure grammars.
\newblock {\em Zeitschrift {f\" ur} Phonetik, Sprachwissenschaft und
  Kommunikationsforschung}, 14:143--172, 1961.
\newblock Reprinted in Y. Bar-Hillel. (1964). {\em Language and Information:
  Selected Essays on their Theory and Application}, Addison-Wesley 1964,
  116--150.

\bibitem{BenediS00}
Jos{\'e}-Miguel Bened\'{\i} and Joan-Andreu S{\'a}nchez.
\newblock Combination of {N}-grams and stochastic context-free grammars for
  language modeling.
\newblock In {\em COLING}, pages 55--61, 2000.

\bibitem{Coppersmith:1987}
D.~Coppersmith and S.~Winograd.
\newblock Matrix multiplication via arithmetic progressions.
\newblock In {\em Proceedings of the Nineteenth Annual ACM Symposium on Theory
  of Computing}, STOC '87, pages 1--6, 1987.

\bibitem{Cormen}
Thomas~H. Cormen, Charles~E. Leiserson, Ronald~L. Rivest, and Clifford Stein.
\newblock {\em Introduction to Algorithms}.
\newblock The MIT Press, 2 edition, 2001.

\bibitem{Demetrescu}
C.~Demetrescu and G.~Italiano.
\newblock A new approach to dynamic all pairs shortest paths.
\newblock {\em J. ACM}, 51(6):968--992, 2004.

\bibitem{Demetrescu:Exp04}
Camil Demetrescu, Stefano Emiliozzi, and Giuseppe~F. Italiano.
\newblock Experimental analysis of dynamic all pairs shortest path algorithms.
\newblock In {\em SODA}, pages 369--378, 2004.

\bibitem{Earley}
Jay Earley.
\newblock An efficient context-free parsing algorithm.
\newblock {\em Commun. ACM}, 13(2):94--102, February 1970.

\bibitem{Goldberg93}
Andrew~V. Goldberg.
\newblock Scaling algorithms for the shortest paths problem.
\newblock In {\em SODA}, pages 222--231, 1993.

\bibitem{Karger+al:93}
D.R. Karger, D.~Koller, and S.~J. Phillips.
\newblock Finding the hidden path: time bounds for all-pairs shortest paths.
\newblock {\em SIAM Journal on Computing}, 22(6):1199--1217, 1993.
\newblock Full version of paper in FOCS '91.

\bibitem{Katsirelos}
George Katsirelos, Nina Narodytska, and Toby Walsh.
\newblock The weighted grammar constraint.
\newblock {\em Annals of Operations Research}, 184:179--207, 2011.

\bibitem{Nguyen:11}
Viet~Cuong Nguyen, Nan Ye, Wee~Sun Lee, and Hai~Leong Chieu.
\newblock Semi-{M}arkov conditional random field with high-order features.
\newblock In {\em ICML 2011 Structured Sparsity: Learning and Inference
  Workshop}, 2011.

\bibitem{nowozin-fnt2011}
S.~Nowozin and C.~H. Lampert.
\newblock Structured learning and prediction in computer vision.
\newblock {\em Foundations and Trends in Computer Graphics and Vision},
  6:185--365, 2011.

\bibitem{Qian:ICML09}
Xian Qian, Xiaoqian Jiang, Qi~Zhang, Xuanjing Huang, and Lide Wu.
\newblock Sparse higher order conditional random fields for improved sequence
  labeling.
\newblock In {\em ICML}, 2009.

\bibitem{Salvador01rna}
Ismael Salvador and Jos{\'e}-Miguel Bened\'{\i}.
\newblock {RNA} modeling by combining stochastic context-free grammars and
  n-gram models.
\newblock {\em International Journal of Pattern Recognition and Artificial
  Intelligence}, 16:309--315, 2002.

\bibitem{TK:ICML}
R.~Takhanov and V.~Kolmogorov.
\newblock {Inference algorithms for pattern-based {CRF}s on sequence data}.
\newblock In {\em ICML}, 2013.

\bibitem{Wang00aunified}
Yeyi Wang, Milind Mahajan, and Xuedong Huang.
\newblock A unified context-free grammar and n-gram model for spoken language
  processing.
\newblock In {\em International Conference of Acoustics, Speech, and Signal
  Processing}, pages 1639--1642, 2000.

\bibitem{Ye:NIPS09}
Nan Ye, Wee~Sun Lee, Hai~Leong Chieu, and Dan Wu.
\newblock Conditional random fields with high-order features for sequence
  labeling.
\newblock In {\em NIPS}, 2009.

\bibitem{Yu:2009}
Chun-Nam~John Yu and Thorsten Joachims.
\newblock Learning structural {SVM}s with latent variables.
\newblock In {\em ICML}, pages 1169--1176, 2009.

\bibitem{Zwick}
U.~Zwick.
\newblock {All pairs shortest paths using bridging sets and rectangular matrix
  multiplication}.
\newblock {\em J. ACM}, 49(3):289–317, 2002.

\end{thebibliography}

\end{document}